\newtheorem{theorem}{Theorem}
\newtheorem{lemma}[theorem]{Lemma}
\newtheorem{corollary}[theorem]{Corollary}
\newcommand{\GG}{\mathcal{G}}
\newcommand{\PP}{\mathcal{P}}
\title{Independent sets near the lower bound in bounded degree graphs}
\author{Zdeněk Dvořák\thanks{Charles University, Prague, Czech Republic.
E-mail: {\tt rakdver@iuuk.mff.cuni.cz}.  Supported by (FP7/2007-2013)/ERC Consolidator grant LBCAD no. 616787.}\and
Bernard Lidický\thanks{Iowa State University, Ames IA, USA. E-mail: {\tt lidicky@iasate.edu}
Supported by NSF grant DMS-1600390.}}
\begin{document}
\maketitle

\begin{abstract}
By Brook's Theorem, every $n$-vertex graph of maximum degree at most $\Delta\ge 3$
and clique number at most $\Delta$ is $\Delta$-colorable, and thus it has an independent
set of size at least $n/\Delta$.  We give an approximate characterization of graphs with independence
number close to this bound, and use it to show that the problem of deciding whether such a graph
has an indepdendent set of size at least $n/\Delta+k$ has a kernel of size $O(k)$.
\end{abstract}

\section{Introduction}

Let $\Delta\ge 3$ be an integer and let $G$ be an $n$-vertex graph of maximum degree at most
$\Delta$ that does not contain a clique of size $\Delta+1$.
By Brooks' Theorem~\cite{brooks1941colouring} $G$ is $\Delta$-colorable, and thus the size of the largest independent set in $G$
(which we denote by $\alpha(G)$) is at least $n/\Delta$.  This bound is tight, as evidenced by graphs obtained
from a disjoint union of cliques of size $\Delta$ by adding a matching.  However, if the bound on the size of the largest
clique (which we denote by $\omega(G)$) is tightened, further improvements are possible.

Albertson, Bollobás and Tucker~\cite{AlbBolTuc} proved that
if a connected graph $G$ of maximum degree $\Delta$ satisfies $\omega(G)\le\Delta-1$, then $\alpha(G)>n/\Delta$, unless
$\Delta=4$ and $G=C_8^2$ or $\Delta=5$ and $G=C_5\boxtimes K_2$ (see Figure~\ref{fig-tight}(a) and (b)).
Under the same assumptions, the bound was further strengthened by King, Lu and Peng~\cite{kinglupeng},
who proved that $G$ has fractional chromatic number at most $\Delta-\frac{2}{67}$, and thus $\alpha(G)\ge \frac{n}{\Delta-\frac{2}{67}}$;
for particular values of $\Delta$, this bound has been further improved in~\cite{Sta79,thoheck,fracsub,kinged}, and Fajtlowicz~\cite{fajtlowicz1978size} gave better bounds
for graphs with smaller clique number.

In this paper, we study the case when $\omega(G)\le \Delta$ in more detail.  How do graphs with largest independent
set close to $n/\Delta$ look like?  An infinite class of examples can be obtained by generalizing the construction from the
first paragraph: if $G$ contains a subgraph $H$ isomorphic to the disjoint union of cliques of size $\Delta$ and $|V(G)\setminus V(H)|\le k$, then clearly
$\alpha(G)\le |V(H)|/\Delta+|V(G)\setminus V(H)|\le n/\Delta +k$.  For $\Delta\ge 6$, we prove an approximate converse to this
statement: If $\max(\Delta(G),\omega(G))\le \Delta$ and $\alpha(G)<n/\Delta+k$, then $G$ contains a subgraph $H$ isomorphic to the disjoint union of cliques of size $\Delta$
such that $|V(G)\setminus V(H)|<34\Delta^2k$.

For $\Delta\in \{3,4,5\}$, a more technical statement is needed.  We say that a graph $H$ is \emph{$\Delta$-tight} if
\begin{itemize}
\item $H$ is a clique on $\Delta$ vertices; or
\item $\Delta=5$ and $H=C_5 \boxtimes K_2$ (Figure~\ref{fig-tight}(a)); or
\item $\Delta=4$ and $H=C_8^2$ (Figure~\ref{fig-tight}(b)); or
\item $\Delta=4$ and $H$ is the graph depicted in Figure~\ref{fig-tight}(c), which we call an \emph{extended clique}, and
its vertices of degree $3$ are its \emph{attachments}; or,
\item $\Delta=4$ and $H$ is the graph depicted in Figure~\ref{fig-tight}(d), which we call an \emph{extended double-clique}; or
\item $\Delta=3$ and $H$ is one of the graphs depicted in Figure~\ref{fig-tight}(e), which we call \emph{diamond necklaces}; or
\item $\Delta=3$ and $H$ is one of the graphs depicted in Figure~\ref{fig-tight}(f), which we call \emph{Havel necklaces}; or,
\item $\Delta=3$ and $H$ is the graph depicted in Figure~\ref{fig-tight}(g), which we call a \emph{triangle-dominated $6$-cycle}.
\end{itemize}

\begin{figure}
\begin{center}
\tikzset{insep/.style={inner sep=1.7pt, outer sep=0pt, circle, fill},
 noin/.style={inner sep=0pt, outer sep=0pt, circle, fill},}
\begin{tikzpicture}[scale=1.2] 
\begin{scope}[xshift=0cm, yshift = 0cm]  
\draw
\foreach \x in {0,1,...,4}{(90+72*\x:1)node[insep](x\x){} -- (90+72*\x:0.5)node[insep](y\x){}}
(x0)--(x1)--(x2)--(x3)--(x4)--(x0)
(y0)--(y1)--(y2)--(y3)--(y4)--(y0)
(x0)--(y1)--(x2)--(y3)--(x4)--(y0)--(x1)--(y2)--(x3)--(y4)--(x0)
(0,-1.5) node{(a)}
;
\end{scope} 
\begin{scope}[xshift=3cm, yshift = 0cm]   
\draw
\foreach \x in {0,1,...,7}{(45*\x:1)node[insep](x\x){}}
(x0)--(x1)--(x2)--(x3)--(x4)--(x5)--(x6)--(x7)--(x0)
(x0)--(x2)--(x4)--(x6)--(x0) (x1)--(x3)--(x5)--(x7)--(x1)
(0,-1.5) node{(b)}
;
\end{scope}
\begin{scope}[xshift=6cm, yshift = 0cm]   
\draw
\foreach \x in {0,1,2,3}{(45+90*\x:0.5)node[insep](x\x){}}
(0,1) node[insep](e1){}
(0,-1) node[insep](e2){}
(1,0) node[insep](z1){}
(-1,0) node[insep](z2){}
(x0)--(x1)--(x2)--(x3)--(x1)--(z2)--(x2)--(x0)--(x3)--(z1)--(e1)--(z2)--(e2)--(z1)--(x0) (e1)--(e2)
(0,-1.5) node{(c)}
;
\end{scope}

\begin{scope}[xshift=9cm, yshift = 0cm]   
\draw
\foreach \x in {0,1,2,3}{(0,0.7)++(45+90*\x:0.5)node[insep](x\x){}}
(x0)--(x1)--(x2)--(x3)--(x0)--(x2)(x1)--(x3)
(1,0.7) node[insep](z1){}
(-1,0.7) node[insep](z2){}
(x0)--(z1)--(x3)(x2)--(z2)--(x1)
\foreach \x in {0,1,2,3}{(0,-0.7)++(45+90*\x:0.5)node[insep](x\x){}}
(x0)--(x1)--(x2)--(x3)--(x0)--(x2)(x1)--(x3)
(1,-0.7) node[insep](y1){}
(-1,-0.7) node[insep](y2){}
(x0)--(y1)--(x3)(x2)--(y2)--(x1)
(y1)--(z1) (y2)--(z2)
(y1) to[out=100,in=280] (z2)
(y2) to[out=80,in=260] (z1)
(0,-1.5) node{(d)}
;
\end{scope}
\end{tikzpicture}
\vskip 1em
\begin{tikzpicture}[scale=1.2] 
\begin{scope}[xshift=0cm, yshift = -3cm]   
\def\e{0.3}
\draw[rotate=90]
(0,\e) node[insep](x1){} -- (0,-\e) node[insep](x2){}
(\e,0) node[insep](y1){} (-\e,0) node[insep](y2){} 
(x1)--(y1)--(x2)--(y2)--(x1);
\begin{scope}[xshift=1.5cm,yshift=0cm]
\draw[rotate=90]
(0,\e) node[insep](a1){} -- (0,-\e) node[insep](a2){}
(\e,0) node[insep](b1){} (-\e,0) node[insep](b2){} 
(a1)--(b1)--(a2)--(b2)--(a1);
\end{scope}
\begin{scope}[xshift=0.75cm,yshift=-0.75cm]
\draw[rotate=0]
(0,\e) node[insep](x1){} -- (0,-\e) node[insep](x2){}
(\e,0) node[insep](z1){} (-\e,0) node[insep](z2){} 
(x1)--(z1)--(x2)--(z2)--(x1);
\end{scope}
\draw(z2)--(y2)  (y1) -- (b1) (b2)--(z1);
\begin{scope}[xshift=3cm,yshift=0cm]
\draw[rotate=90]
(0,\e) node[insep](x1){} -- (0,-\e) node[insep](x2){}
(\e,0) node[insep](y1){} (-\e,0) node[insep](y2){} 
(x1)--(y1)--(x2)--(y2)--(x1);
\begin{scope}[xshift=1.5cm,yshift=0cm]
\draw[rotate=90]
(0,\e) node[insep](a1){} -- (0,-\e) node[insep](a2){}
(\e,0) node[insep](b1){} (-\e,0) node[insep](b2){} 
(a1)--(b1)--(a2)--(b2)--(a1);
\end{scope}
\begin{scope}[xshift=0.75cm,yshift=-0.75cm]
\draw[rotate=0]
(0,0) node[insep](z2){};
\end{scope}
\draw(z2)--(y2)  (y1) -- (b1) (b2)--(z2);
\end{scope}
\begin{scope}[xshift=6cm,yshift=0cm]
\draw (0,0) node[insep](y){};
\begin{scope}[xshift=1.5cm,yshift=0cm]
\draw (0,0) node[insep](b){};
\end{scope}
\begin{scope}[xshift=0.75cm,yshift=-0.75cm]
\draw[rotate=0]
(0,\e) node[insep](x1){} -- (0,-\e) node[insep](x2){}
(\e,0) node[insep](z1){} (-\e,0) node[insep](z2){} 
(x1)--(z1)--(x2)--(z2)--(x1);
\end{scope}
\draw(z2)--(y) -- (b) --(z1);
\end{scope}
\draw (3.75,-1.5)node{(e)};
\end{scope}
\end{tikzpicture}
\\
\begin{tikzpicture}[scale=1.5]
\begin{scope}
\draw (0.15,0) node[insep](a){}
--+(45:0.3) node[insep](b){}
--+(-45:0.3) node[insep](c){}--(a)
--++(-0.3,0) node[insep](x){}
--+(135:0.3) node[insep](y){}
--+(225:0.3) node[insep](z){}--(x)
(0,0.7) node[insep](k){}
(0,-0.7) node[insep](l){}
(b)--(k)--(y)
(c)--(l)--(z)
(1,0) node[insep](q){}
(k)--(q)--(l)
;
\end{scope}
\begin{scope}[xshift=3cm]
\draw (0.15,0) node[insep](a){}
--+(45:0.3) node[insep](b){}
--+(-45:0.3) node[insep](c){}--(a)
--++(-0.3,0) node[insep](x){}
--+(135:0.3) node[insep](y){}
--+(225:0.3) node[insep](z){}--(x)
(0,0.7) node[insep](k){}
(0,-0.7) node[insep](l){}
(b)--(k)--(y)
(c)--(l)--(z)
(1,0) node[insep](q1){}
--++(45:0.3) node[insep](q2){}
--++(-45:0.3) node[insep](q3){}
--++(45:-0.3) node[insep](q4){}--(q1)--(q3)
(k)--(q2) (l)--(q4)
;
\end{scope}
\draw(1.8,-1) node{(f)};
\end{tikzpicture}
\\
\begin{tikzpicture}[scale=0.8]
\draw
(-2,0) node[insep](z2){}
(-3,-1) node[insep](z1){}
(-3,1) node[insep](z3){}
(z1)--(z2)--(z3)--(z1)
(2,0) node[insep](w2){}
(3,-1) node[insep](w1){}
(3,1) node[insep](w3){}
(w1)--(w2)--(w3)--(w1)
\foreach \x in {1,2,3}{
(-0.6,-2+\x) node[insep](x\x){}--(z\x)
(0.6,-2+\x) node[insep](y\x){}--(w\x)
}
(x1) -- (y1)--(x2)--(y3)--(x3)--(y2)--(x1)
;
\draw(0,-2) node{(g)};
\end{tikzpicture}
\end{center}
\caption{Non-clique $\Delta$-tight graphs.}\label{fig-tight}
\end{figure}
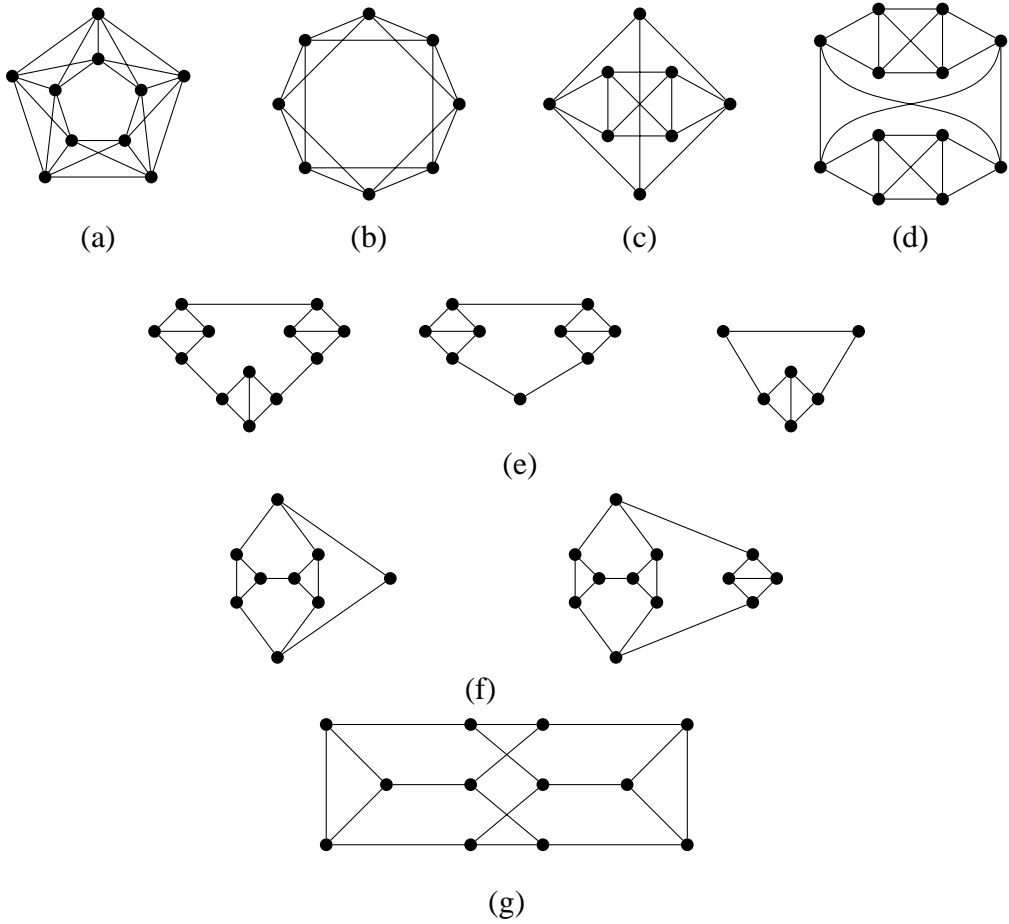

Observe that if $H$ is $\Delta$-tight, then $\alpha(H)=|V(H)|/\Delta$.  We say that a graph $G_0$ is 
\emph{$\Delta$-tightly partitioned} if there exists a partition of the vertices of $G_0$ such that
each part induces a $\Delta$-tight subgraph of $G_0$.  Clearly, a $\Delta$-tightly partitioned graph $G_0$ satisfies $\alpha(G_0)\le |V(G_0)|/\Delta$.
Our main result now can be stated as follows.
\begin{theorem}\label{thm-approx}
Let $\Delta\ge 3$ and $k\ge 0$ be integers, and let $G$ be an $n$-vertex graph with $\max(\Delta(G),\omega(G))\le \Delta$.
If $\alpha(G)<n/\Delta+k$, then there exists a set $X\subseteq V(G)$ of size less than $34\Delta^2k$ such that
$G-X$ is $\Delta$-tightly partitioned.
\end{theorem}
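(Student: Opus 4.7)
Define the \emph{excess} of a graph $F$ with $\max(\Delta(F),\omega(F)) \le \Delta$ by $\mathrm{ex}(F) = \alpha(F) - |V(F)|/\Delta$; by Brooks' theorem $\mathrm{ex}(F) \ge 0$, and our hypothesis reads $\mathrm{ex}(G) < k$. Call an induced $\Delta$-tight subgraph $H \subseteq F$ \emph{extractable} if $\alpha(F) = \alpha(F - V(H)) + |V(H)|/\Delta$. The inequality $\alpha(F) \le \alpha(F - V(H)) + \alpha(H) = \alpha(F - V(H)) + |V(H)|/\Delta$ holds automatically (any maximum independent set of $F$ decomposes across $V(H)$ and its complement), so extractability is exactly the reverse inequality, and extraction of an extractable $H$ preserves excess: $\mathrm{ex}(F - V(H)) = \mathrm{ex}(F)$. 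I therefore iteratively remove an extractable $\Delta$-tight induced subgraph from the current residue, starting with $G_0 := G$ and producing a chain $G_0, G_1, \ldots, G_t =: G'$ in which $G'$ contains no extractable $\Delta$-tight induced subgraph. The removed pieces form a $\Delta$-tight partition of $G - V(G')$ and $\mathrm{ex}(G') = \mathrm{ex}(G) < k$, so it suffices to prove $|V(G')| < 34\Delta^2 k$.

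The main obstacle is the structural lemma: \emph{if $F$ satisfies $\max(\Delta(F),\omega(F)) \le \Delta$ and contains a copy $Q$ of $K_\Delta$, then $F$ contains an extractable $\Delta$-tight induced subgraph.} The copy $Q$ itself is extractable exactly when every maximum independent set of $F$ meets $V(Q)$; the failure mode is that some maximum independent set $I^*$ of $F$ avoids $V(Q)$, in which case each $v \in V(Q)$ must have a unique external neighbor that lies in $I^*$ (otherwise $I^* \cup \{v\}$ would be larger). For $\Delta \ge 6$ such a configuration should always admit a local swap producing a maximum independent set that meets $V(Q)$, and $Q$ itself is extractable. For $\Delta \in \{3,4,5\}$ the failure is genuinely realizable, but by a case analysis on the local structure around $Q$ one shows that every such failure pattern exhibits a larger $\Delta$-tight graph from Figure~\ref{fig-tight} (extended clique, extended double-clique, diamond or Havel necklace, or triangle-dominated $6$-cycle) that contains $V(Q)$ and is an induced subgraph of $F$, and one verifies that this larger subgraph is extractable. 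Enumerating these obstructions is the technical heart of the proof and justifies the full list of $\Delta$-tight graphs.

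Once the lemma is established, the terminal residue $G'$ satisfies $\omega(G') \le \Delta - 1$ (a $K_\Delta$ in $G'$ would yield an extractable $\Delta$-tight subgraph) and has no $\Delta$-tight connected component (such a component $C$ is trivially extractable because its maximum independent set is disjoint from the rest of $G'$). For $\Delta \ge 4$, King--Lu--Peng then gives $\chi_f(C) \le \Delta - 2/67$ for every component $C$ of $G'$, since the ABT exceptions $C_5 \boxtimes K_2$ and $C_8^2$ are themselves $\Delta$-tight and cannot appear as components, whence
\[
\alpha(G') \;\ge\; \frac{|V(G')|}{\Delta - 2/67} \;=\; \frac{|V(G')|}{\Delta} + \frac{2\,|V(G')|}{67\,\Delta(\Delta - 2/67)} \;>\; \frac{|V(G')|}{\Delta} + \frac{|V(G')|}{34\Delta^2}.
\]
For $\Delta = 3$, $G'$ is triangle-free and subcubic, so Staton's bound $\alpha(G') \ge 5|V(G')|/14$ yields $\mathrm{ex}(G') \ge |V(G')|/42 > |V(G')|/(34 \cdot 9)$. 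Either way $\mathrm{ex}(G') > |V(G')|/(34\Delta^2)$, which combined with $\mathrm{ex}(G') < k$ forces $|V(G')| < 34\Delta^2 k$. Setting $X := V(G')$ completes the proof.
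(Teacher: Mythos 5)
Your structural lemma --- that any $F$ with $\max(\Delta(F),\omega(F))\le\Delta$ containing a $K_\Delta$ must contain an \emph{extractable} $\Delta$-tight induced subgraph --- is false. Take $F$ to be $K_\Delta$ with one pendant vertex $u_i$ attached to each clique vertex $v_i$, so $|V(F)|=2\Delta$, $\Delta(F)=\omega(F)=\Delta$, and $F$ has a unique $\Delta$-tight induced subgraph, namely $Q=K_\Delta$. Here $\alpha(F)=\Delta$ (all pendants) and $\alpha(F-V(Q))=\Delta$ as well, so $Q$ is not extractable and $F$ has no extractable $\Delta$-tight subgraph at all. Consequently your terminal residue $G'$ need not satisfy $\omega(G')\le\Delta-1$, and the appeal to King--Lu--Peng (and, for $\Delta=3$, to Staton via triangle-freeness) does not go through. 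The logical slip happens in the phrase ``a local swap producing a maximum independent set that meets $V(Q)$, and $Q$ itself is extractable'': you yourself note that $Q$ is extractable exactly when \emph{every} maximum independent set meets $V(Q)$. The swap $I' = (I^*\setminus\{u_1\})\cup\{v_1\}$ is indeed always available, for every $\Delta\ge 3$, but it only exhibits one maximum independent set meeting $V(Q)$; the avoiding set $I^*$ still exists, so $Q$ is still not extractable, and the swap proves nothing.

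The paper avoids exactly this trap. Lemma~\ref{lemma-decomp} does not try to certify that every clique removed loses one from $\alpha$; instead it partitions $V(G)$ into a $\Delta$-tightly partitioned part $A$, a buffer $B$ that is merely $K_\Delta$-partitioned (not required to be ``extractable'') but whose size is charged to $|C|+|D|$ via $|B|\le 3\Delta(|C|+|D|)$, a set $C$ that is $\Delta$-profitably nibbled, and a $\Delta$-free set $D$. The identity $\alpha(G)=\alpha(G[B\cup C\cup D])+|A|/\Delta$ is then established by a Brooks-type list-coloring argument (Lemma~\ref{lemma-list}) rather than by one-clique-at-a-time extraction, and the lower bound on the excess comes from the inductive Lemmas~\ref{lemma-free} and~\ref{lemma-free3}, which bound $\alpha$ in terms of the number of $\Delta$-free vertices --- a statement that is actually provable, unlike the clique-forcing lemma you posit. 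Your use of Corollaries~\ref{cor-king} and~\ref{cor-dsv} as base cases matches the paper, but without the buffer $B$ and the $\Delta$-profitable/$\Delta$-free machinery you cannot reach those base cases from a graph that still contains non-extractable $K_\Delta$'s.
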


Note that the set $X$ of Theorem~\ref{thm-approx} can be found in time $O(\Delta^2n)$ without specifying $k$ in
advance (cf. a stronger statement in Lemma~\ref{lemma-decomp} below),
and that every graph containing such a set $X$ satisfies $\alpha(G)\le |V(G)|/\Delta+|X|$.
Hence, we obtain the following algorithmic consequence.
\begin{corollary}\label{cor-approx}
Let $\Delta\ge 3$ be an integer, and let $G$ be an $n$-vertex graph with $\max(\Delta(G),\omega(G))\le \Delta$.
The difference $\alpha(G)-n/\Delta$ can be approximated up to factor $34\Delta^2$ in time $O(\Delta^2n)$.
\end{corollary}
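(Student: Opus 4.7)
The plan is to invoke Lemma~\ref{lemma-decomp}, the constructive counterpart of Theorem~\ref{thm-approx} alluded to in the remark just preceding the corollary, which produces in $O(\Delta^2 n)$ time, without any prior value of $k$, a set $X\subseteq V(G)$ such that $G-X$ admits a $\Delta$-tight partition. I would then return $|X|$ as the estimate of $d:=\alpha(G)-n/\Delta$.

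The upper bound $d\le |X|$ is immediate from the observation recorded just after Theorem~\ref{thm-approx}: every $\Delta$-tight graph $H$ satisfies $\alpha(H)=|V(H)|/\Delta$, hence summing over the $\Delta$-tight parts of $G-X$ gives $\alpha(G-X)\le (n-|X|)/\Delta$, and extending any such independent set by all of $X$ yields $\alpha(G)\le n/\Delta+|X|$. For the matching direction, let $k$ be the smallest positive integer with $\alpha(G)<n/\Delta+k$, so that $k\le \lfloor d\rfloor+1$. Applying Theorem~\ref{thm-approx}, or rather its algorithmic strengthening, with this $k$ forces the $X$ returned by the algorithm to satisfy $|X|<34\Delta^2 k$. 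Together these two inequalities pin $d$ down within the claimed factor of $|X|$ (modulo the unavoidable integrality slack caused by $\alpha(G)$ being an integer while $n/\Delta$ need not be).

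The main obstacle, then, is not this corollary, which collapses to the two short inequalities above, but rather the proof of Lemma~\ref{lemma-decomp}: one has to recast the existential argument underlying Theorem~\ref{thm-approx} as a local linear-time procedure that repeatedly identifies $\Delta$-tight blocks in bounded-radius neighborhoods and places into $X$ any vertex whose local structure obstructs such a block. The runtime factor of $\Delta^2$ is consistent with scanning each vertex's neighborhood up to a small constant depth while recognizing the finite list of $\Delta$-tight templates from Figure~\ref{fig-tight}.
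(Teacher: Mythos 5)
Your proposal follows the paper's structure: compute the partition of Lemma~\ref{lemma-decomp} in $O(\Delta^2 n)$ time, return a quantity derived from $|X|=|C\cup D|$, and establish two-sided bounds. The upper bound $\alpha(G)-n/\Delta\le |X|$ is argued exactly as in the paper. But there is a concrete gap in the other direction, and the slack you dismiss as ``unavoidable integrality'' is in fact avoidable.

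You invoke Theorem~\ref{thm-approx} with $k$ equal to the smallest positive integer satisfying $d<k$ (where $d:=\alpha(G)-n/\Delta$), obtaining only $|X|<34\Delta^2(\lfloor d\rfloor+1)$. For the returned estimate $|X|$ to approximate $d$ within factor $34\Delta^2$ you need the stronger bound $|X|\le 34\Delta^2 d$. These differ substantially when $d$ is small: if $n/\Delta$ is non-integral and $\alpha(G)=\lceil n/\Delta\rceil$, then $d$ can be as small as $1/\Delta$ while your bound still only gives $|X|<34\Delta^2$, so the ratio can blow up to roughly $34\Delta^3$. The paper sidesteps this by not using Theorem~\ref{thm-approx} (whose $k$ must be an integer) but rather Lemma~\ref{lemma-boundsize}, which is explicitly stated for \emph{rational} $k\ge 0$. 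Setting $k=(|C|+|D|)/(34\Delta^2)$ in that lemma gives directly $d\ge (|C|+|D|)/(34\Delta^2)$, i.e. $|X|\le 34\Delta^2 d$ with no rounding, and the paper returns $k$ itself (a harmless rescaling of $|X|$) so that $k\le d\le 34\Delta^2 k$. To repair your argument, replace the appeal to Theorem~\ref{thm-approx} with a direct appeal to Lemma~\ref{lemma-boundsize} at the rational value $k=|X|/(34\Delta^2)$; as written, your proof only achieves a factor that degrades for small $d$.
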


Another application of Theorem~\ref{thm-approx} (or more precisely, its refinement Lemma~\ref{lemma-decomp}) concerns fixed-parameter
tractability.  An algorithmic problem is called \emph{fixed-parameter tractable} with respect to a parameter $p$ if
there exists a computable function $f$, a polynomial $q$, and an algorithm that solves each input instance $Z$ in time $f(p(Z))q(|Z|)$.
This notion has been influential in the area of computational complexity, giving a plausible approach towards many otherwise intractable
problems~\cite{downey2012parameterized,Niedermeier2006,cygan2015parameterized}.

A popular choice of the parameter is the value of the solution; i.e., such fixed-parameter tractability results show that
the solution to the problem can be found quickly if its value is small.  However, in the case of the problem of finding the largest independent set
when restricted to some class of sparse graphs, this parameterization makes little sense---the problem is fixed-parameter tractable
for the trivial (and unhelpful) reason that all large graphs in the class have large independent sets.  In this setting, parameterization
by the excess of the size of the largest independent set over the lower bound for the independence number in the class is more reasonable.

Let $\GG$ be a class of graphs, and let $f(n)=\min\{\alpha(G):G\in \GG, |V(G)|=n\}$.  Let \emph{$(\GG,\alpha)$-ATLB} (Above Tight Lower Bound)
denote the algorithmic problem of deciding, for an $n$-vertex graph $G\in \GG$ and an integer $k\ge 0$, whether
$\alpha(G)\ge f(n)+k$.  For specific graph classes $\GG$, we are interested in the fixed-parameter tractability of this problem
when parameterized by $k$, i.e., in finding algorithms for this problem with time complexity $f(k)\text{poly}(n)$.

The best-known case of this problem concerns the class $\PP$ of planar graphs.  By the Four Color Theorem~\cite{AppHak1,AppHakKoc},
all $n$-vertex planar graphs have independent sets of size at least $\lceil n/4\rceil$, and this lower bound is tight.
However, there is not even a polynomial-time algorithm known to decide whether $\alpha(G)>n/4$ for an $n$-vertex planar graph $G$,
and consequently the complexity of $(\PP,\alpha)$-ATLB is wide open~\cite{Niedermeier2006,BodlaenderEtAl2008,FellowsEtAl2012}.
The variant of the problem for planar triangle-free graphs was solved by Dvořák and Mnich~\cite{dmnich}.
Relevantly to the current paper, they also considered the case of planar graphs of maximum degree $4$ and showed it to be
fixed-parameter tractable~\cite{mnich2016large,dmnichfull}; for both problems, they obtained algorithms with time complexity $2^{O(\sqrt{k})}n$.

As a consequence of our theory, we strengthen the last mentioned result in multiple directions.  Firstly, we can drop the assumption
of planarity and relax the condition on the maximum degree to an arbitrary integer $\Delta\ge 3$ (assuming that cliques of size $\Delta+1$
are forbidden---let $\GG_\Delta$ denote the class of such graphs).  Secondly, we obtain a \emph{linear-size kernel} for the problem, i.e., we show that in polynomial time (specifically,
$O(\Delta^2n)$), each instance of the problem can be reduced to an instance of size $O(\Delta^3k)$.
\begin{corollary}\label{cor-kernel}
There exists an algorithm with time complexity $O(\Delta^2 n)$ that, given as an input 
an integer $\Delta\ge 3$, an $n$-vertex graph $G$ with $\max(\Delta(G),\omega(G))\le\Delta$, and an integer $k\ge 0$,
returns an induced subgraph $G_0$ of $G$ with $n_0\le 114\Delta^3k$ vertices such that $\alpha(G)\ge n/\Delta+k$ if and only if $\alpha(G_0)\ge n_0/\Delta+k$.
\end{corollary}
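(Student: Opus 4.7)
The plan is to invoke Lemma~\ref{lemma-decomp} as preprocessing and then trim its output to the required kernel size. Run the lemma on $G$ in time $O(\Delta^2 n)$ to obtain a set $X\subseteq V(G)$ together with a $\Delta$-tight partition $\PP$ of $G-X$. I take it as given that the lemma returns an $X$ that is near-minimum, so that by the contrapositive of Theorem~\ref{thm-approx} the event $|X|\ge 34\Delta^2 k$ already certifies $\alpha(G)\ge n/\Delta+k$, and that $\PP$ has no edges between distinct parts, making $G-X$ a disjoint union of $\Delta$-tight components.

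In the case $|X|\ge 34\Delta^2 k$, the instance is a YES-instance; output $G_0=G[I]$ for an independent set $I$ of $G$ of size $\Delta k$, which exists because $\alpha(G)\ge n/\Delta+k\ge |X|/\Delta\ge 34\Delta k$ and can be found in linear time from a $\Delta$-coloring produced by Brooks' theorem. For this $G_0$ we have $n_0=\Delta k\le 114\Delta^3 k$ and $\alpha(G_0)=n_0\ge 2k=n_0/\Delta+k$ whenever $\Delta\ge 2$.

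In the main case $|X|<34\Delta^2 k$, let $T\subseteq\PP$ be the set of parts with an edge to $X$; the parts in $\PP\setminus T$ are then components of $G$. For each $H\in T$ select a bounded-size stub $\tilde H\subseteq V(H)$ satisfying (i) $\tilde H\supseteq V(H)\cap N_G(X)$, (ii) $V(H)\setminus\tilde H$ admits a $\Delta$-tight partition, and (iii) no $G$-edge joins $\tilde H$ to $V(H)\setminus\tilde H$. Set $G_0=G[X\cup\bigcup_{H\in T}\tilde H]$ and $D=V(G)\setminus V(G_0)$. The construction makes $G[D]$ a disjoint union of $\Delta$-tight subgraphs with no $G$-edge to $V(G_0)$, so $\alpha(G)=\alpha(G_0)+|D|/\Delta$ and the equivalence $\alpha(G)\ge n/\Delta+k\iff\alpha(G_0)\ge n_0/\Delta+k$ follows by subtracting $|D|/\Delta=(n-n_0)/\Delta$. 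For the size bound, each part in $T$ consumes at least one of the at most $|X|\Delta\le 34\Delta^3 k$ edges leaving $X$, and the stubs have size bounded by a constant (independent of $n$), yielding $n_0\le 114\Delta^3 k$ after a careful case-by-case accounting.

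The main obstacle is designing the stub $\tilde H$ for the two families of necklaces in Figure~\ref{fig-tight}(e)--(f), the only $\Delta$-tight graphs of unbounded size (arising only when $\Delta=3$): one must show that an arbitrarily long necklace whose external attachments to $X$ are confined to a bounded stretch admits a clean cut along an empty ``link boundary'' past the last attachment, leaving a constant-size stub and a tail that is itself a smaller necklace of the same family. For every other $\Delta$-tight type listed in Figure~\ref{fig-tight}, all of bounded size by inspection, taking $\tilde H=V(H)$ makes the stub conditions trivially satisfied, and the verification reduces to the already-argued combinatorial identity $\alpha(G)=\alpha(G_0)+|D|/\Delta$.
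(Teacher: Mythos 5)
Your proposal rests on a structural assumption about Lemma~\ref{lemma-decomp} that the lemma does not actually provide, and the gap is fatal to the main case of the argument.

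You assume that after removing $X$ (your notation for the small set; in the paper, $X=C\cup D$), what remains is a \emph{disjoint union} of $\Delta$-tight components, i.e.\ that the $\Delta$-tight partition $\PP$ of $G-X$ has no edges between distinct parts, and that the only external edges from a part go into $X$. That is false. Lemma~\ref{lemma-decomp} only guarantees that $G[A]$ admits a partition where each part \emph{induces} a $\Delta$-tight subgraph; parts can and typically do have many edges to one another and to $B\cup C\cup D$. (Indeed, the entire point of the set $B$ and of the clique-transfer loop in the construction of $A_3$ is to control, not eliminate, such cross edges.) Consequently there is no reason why an independent set of $G-V(G_0)$ of size $(n-n_0)/\Delta$ would extend any independent set of $G_0$: the vertices you pick inside one part can conflict with those in another part or with the stub $\tilde H$. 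Your claimed identity $\alpha(G)=\alpha(G_0)+|D|/\Delta$ therefore does not follow from your construction, and the necklace-cutting scheme is designed to solve a problem that doesn't arise while leaving the actual difficulty untouched.

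The paper's route avoids this entirely. Lemma~\ref{lemma-decomp} already certifies the crucial identity $\alpha(G)=\alpha(G[B\cup C\cup D])+|A|/\Delta$ \emph{despite} parts of $A$ being interconnected: given any independent set $S$ of $G[B\cup C\cup D]$, one assigns lists $L(v)=\{1,\dots,\Delta\}$ or $\{2,\dots,\Delta\}$ to $v\in A$ depending on whether $v$ has a neighbour in $B\cup C\cup D$, notes that the clique-transfer step into $B$ (plus, for $\Delta=3$, the absence of bad odd cycles in a tight partition) rules out the obstructions of Lemma~\ref{lemma-list}, and uses colour class $1$ to extend $S$ by exactly $|A|/\Delta$ vertices. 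With this identity, the kernel is simply $G_0=G[B\cup C\cup D]$, and the size bound $n_0<114\Delta^3k$ follows from $|C|+|D|<34\Delta^2k$ together with $|B|\le3\Delta(|C|+|D|)$. Your proposal neither invokes the list-colouring extension nor uses the bound on $|B|$, and the stub/necklace analysis cannot substitute for either. (Your handling of the ``yes'' branch $|X|\ge34\Delta^2k$ by outputting an independent set of size $\Delta k$ is a fine variant, but the main branch is where the proof lives.)
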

Such an instance can be solved by brute force, leading to a $2^{O(\Delta^3k)}+O(\Delta^2n)$ algorithm for $(\GG_\Delta,\alpha)$-ATLB.
Furthermore, as Robertson et al.~\cite{quickly} showed, an $m$-vertex planar graph has tree-width $O(\sqrt{m})$,
and thus its largest independent set can be found in time $2^{O(\sqrt{m})}$.  Hence, for the case of planar graphs of maximum degree $4$
previously considered in~\cite{mnich2016large,dmnichfull}, we can improve the time complexity to $2^{O(\sqrt{k})}+O(n)$.

After introducing graph-theoretic results on sizes of independent sets in Section~\ref{sec-uncov}, we prove Theorem~\ref{thm-approx} and Corollaries~\ref{cor-approx} and \ref{cor-kernel}
in Section~\ref{sec-mainproofs}.

\section{Many vertices not covered by $\Delta$-tight subgraphs}\label{sec-uncov}

We say that a vertex $v\in V(G)$ is \emph{$\Delta$-free} if $v$ is not contained in any $\Delta$-tight induced subgraph of $G$.
As a special case, Theorem~\ref{thm-approx} implies that if $G$ has many $\Delta$-free vertices, then its largest independent
set is much larger than the lower bound $n/\Delta$.  In this section, we prove this special case in Lemmas~\ref{lemma-free} and \ref{lemma-free3}.
Let us start by a simple observation.

\begin{lemma}\label{lemma-reducl}
Let $\Delta>0$ be an integer and let $G$ be a graph with $\Delta(G)\le \Delta$,
and let $K$ be a clique of size $\Delta$ in $G$.  Consider one of the following situations:
\begin{itemize}
\item $K$ contains a vertex $v$ whose degree in $G$ is $\Delta-1$, and $G_0=G-V(K)$, or
\item $V(G)\setminus V(K)$ contains two adjacent vertices $u'$ and $v'$ with neighbors in $K$, and $G_0=G-V(K)$, or
\item $V(G)\setminus V(K)$ contains two distinct non-adjacent vertices $u'$ and $v'$ with neighbors in $K$, and $G_0=G-V(K)+u'v'$.
\end{itemize}
Then $\alpha(G)\ge\alpha(G_0)+1$.
\end{lemma}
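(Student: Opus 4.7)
The plan is to treat all three cases by the same template: take a maximum independent set $I_0$ in $G_0$ and exhibit a vertex of $V(K)$ that can be added to $I_0$ without creating an edge in $G$. The key numerical fact driving everything is that since $|V(K)|=\Delta$ and $\Delta(G)\le\Delta$, each vertex of $K$ has exactly $\Delta-1$ neighbors inside $K$, and therefore at most one neighbor in $V(G)\setminus V(K)$.

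In the first situation this is immediate: the vertex $v\in V(K)$ of degree $\Delta-1$ has all its neighbors in $K$, so $I_0\cup\{v\}$ is independent in $G$ and has size $\alpha(G_0)+1$. For the remaining two situations, I would pick neighbors $u,v\in V(K)$ of $u'$ and $v'$ respectively. The first preliminary step is to observe $u\ne v$: otherwise this common vertex would have the $\Delta-1$ neighbors inside $K$ together with two external neighbors $u'$ and $v'$, contradicting $\Delta(G)\le\Delta$. The second preliminary step is that by the degree count above, the only external neighbor of $u$ is $u'$, and the only external neighbor of $v$ is $v'$.

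Now I would argue that at most one of $u',v'$ belongs to $I_0$. In situation two this is because $u'v'$ is an edge of $G_0$, and in situation three this is because $u'v'$ is the edge we added to form $G_0$; either way $I_0$ cannot contain both. So I can pick $w\in\{u,v\}$ such that the corresponding external vertex (the unique possible neighbor of $w$ in $G_0\cup I_0$) is not in $I_0$, and then $I_0\cup\{w\}$ is independent in $G$ of size $\alpha(G_0)+1$, giving $\alpha(G)\ge\alpha(G_0)+1$.

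I do not expect any real obstacle here; the only point that requires a moment of care is verifying $u\ne v$ in situations two and three, which is exactly where the maximum-degree hypothesis is used. Everything else is a routine case distinction on whether $u'$ or $v'$ lies in $I_0$.
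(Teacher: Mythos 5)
Your proof is correct and follows essentially the same route as the paper's: observe that each vertex of $K$ has at most one neighbor outside $K$, handle the first case directly, and in the last two cases use the edge $u'v'$ of $G_0$ to conclude that at most one of $u',v'$ lies in the independent set, then add the corresponding vertex of $K$. The explicit check that $u\ne v$ is a nice touch; the paper leaves it implicit in the opening degree observation.
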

\begin{proof}
Since $\Delta(G)\le \Delta$, every vertex of $K$ has at most one neighbor not in $K$.
In the last two cases, let $u$ and $v$ be the neighbors of $u'$ and $v'$ in $K$, respectively.
Consider an independent set $S$ of $G_0$ of size $\alpha(G_0)$.

In the first case, $v$ has no neighbor in $S$, as all neigbors of $v$ belong to $K$.
In the last two cases, since $u'v'\in E(G_0)$, we can by symmetry assume that
$v'\not\in S$.  Hence, $S\cup\{v\}$ is in both cases an independent set in $G$.
\end{proof}

We say that distinct vertices $u$ and $v$ are \emph{$\Delta$-adjacent} in a graph $G$ if $uv\not\in E(G)$ and $G$ contains
an induced subgraph $H_0$ with $u,v\in V(H_0)$ such that $H_0+uv$ is $\Delta$-tight.

For a vertex $v$, let $N_G[v]$ denote the set consisting of $v$ and of the neighbors of $v$ in $G$,
and for a set $S\subseteq V(G)$, let $N_G[S]=\bigcup_{v\in S} N_G[v]$.
We say that a set $Z\subseteq V(G)$ is \emph{$\Delta$-profitable} if $G[Z]$ contains an independent
set $S$ such that $N_G[S]\subseteq Z$ and $|Z|\le \Delta|S|-1$.  Thus, any independent set of $G-Z$
can be extended to an independent set in $G$ by adding $S$, and this way the size of the independent
set increases by slightly more than $|Z|/\Delta$.

\begin{lemma}\label{lemma-noadj}
Let $\Delta\ge 4$ be an integer and let $G$ be a graph with $\Delta(G)\le \Delta$.
Let $S=\{v_1,v_2,v_3\}$ be a subset of vertices of $G$.
If $G$ contains no $\Delta$-profitable set of size at most $\Delta+10$, then the vertices of $S$ are not
pairwise $\Delta$-adjacent.
\end{lemma}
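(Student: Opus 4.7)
The plan is to construct an explicit $\Delta$-profitable set of size at most $\Delta+10$ from the three $\Delta$-adjacency witnesses $H_{12}$, $H_{13}$, $H_{23}$.

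I would first establish a preliminary subclaim: for every $\Delta$-tight graph $H^{\ast}$ and every edge $uv\in E(H^{\ast})$, the graph $H^{\ast}-uv$ contains an independent set of size $|V(H^{\ast})|/\Delta+1$. This is verified by running through the $\Delta$-tight graphs listed in the preamble ($K_\Delta$, $C_5\boxtimes K_2$, $C_8^2$, extended clique, and extended double-clique) and exhibiting the enlarged independent set in each case. Applying the subclaim to $H^{\ast}=H_{12}+v_1v_2$ gives an independent set $I\subseteq V(H_{12})$ with $|I|=|V(H_{12})|/\Delta+1$. I would then set $S=I$ and $Z=V(H_{12})\cup(N_G(I)\setminus V(H_{12}))$, so that $N_G[S]\subseteq Z$ by construction. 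Using the bound $|Z|\le |V(H_{12})|+\sum_{v\in I}(\Delta-\deg_{H_{12}}(v))$, the $\Delta$-profitability condition $|Z|\le\Delta|I|-1$ reduces to the edge-count inequality
\[
\sum_{v\in I}\deg_{H_{12}}(v)\ \ge\ |V(H_{12})|+1,
\]
i.e., the number of edges from $I$ to $V(H_{12})\setminus I$ must strictly exceed $|V(H_{12})|$. This inequality is easy to check in almost every case: for the large $\Delta$-tight graphs ($C_5\boxtimes K_2$, $C_8^2$, the extended double-clique, and most subcases of the extended clique) the near-$\Delta$-regularity of $H^{\ast}$ makes the edge count automatically large, while for $H_{12}+v_1v_2=K_\Delta$ with $\Delta\ge 5$ the unique choice $I=\{v_1,v_2\}$ gives $2(\Delta-2)\ge \Delta+1$ edges. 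The size bound $|Z|\le\Delta+10$ follows from the explicit list of $\Delta$-tight graphs having at most $12$ vertices together with the at-most-four external neighbors allowed by the degree bound $\deg_G\le \Delta$.

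Two exceptional configurations remain: $\Delta=4$ with $H_{12}+v_1v_2=K_4$ (where $|I|=2$ contributes only $4<5$ edges), and the extended clique subcase with $\{v_1,v_2\}$ equal to the pair of degree-$3$ attachments $\{e_1,e_2\}$ (where the edge count is $8<9$). For both exceptions I would switch to $S=\{v_1,v_2,v_3\}$ (independent by the pairwise $\Delta$-adjacency hypothesis) and $Z=N_G[S]$, and show $|Z|\le 3\Delta-1$ using inclusion-exclusion. The two common neighbors of $v_1,v_2$ supplied by $H_{12}$ give $|N_G(v_1)\cup N_G(v_2)|\le 2\Delta-2$, and I would bound $|N_G(v_3)\setminus(N_G(v_1)\cup N_G(v_2))|$ by combining common-neighbor information from $H_{13}$ and $H_{23}$ with the degree bound $\deg_G(v_i)\le\Delta$. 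The main obstacle I anticipate is the subcase where some $H_{i3}+v_iv_3$ (for $i\in\{1,2\}$) is an extended double-clique with $v_iv_3$ a cross edge, contributing zero pairwise common neighbors. Here the degree constraint on $v_i$, whose $H_{12}$-neighborhood has size $2$ and whose $H_{i3}$-neighborhood has size $3$, forces a nontrivial overlap between the two, producing the compensating saving in the inclusion-exclusion bound for $|N_G[S]|$.
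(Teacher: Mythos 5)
Your overall strategy is essentially the paper's, phrased through a reusable subclaim: the paper also first tries an inclusion--exclusion bound on $N_G[S]$ for the triple $\{v_1,v_2,v_3\}$, and otherwise extracts a profitable set from a witness $H_{ij}$ by a case analysis of the $\Delta$-tight graph $H_{ij}+v_iv_j$. Your subclaim (``remove one edge from a $\Delta$-tight graph and $\alpha$ goes up by one'') is a clean way to package that case analysis, and your degree-sum reformulation of profitability is correct.

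However, there is a genuine gap in how you handle the exceptional configurations. Suppose $H_{12}+v_1v_2$ is $K_4$ or the extended clique with $\{v_1,v_2\}$ the attachments. You then fall back to $Z=N_G[S]$ and want $|N_G[S]|\le 3\Delta-1=11$. Since $v_1,v_2,v_3$ are pairwise non-adjacent, $|N_G[S]|=3+|N_G(v_1)\cup N_G(v_2)\cup N_G(v_3)|$, and the two common neighbours supplied by $H_{12}$ give $|N_G(v_1)\cup N_G(v_2)|\le 6$; so you need $|N_G(v_3)\cap (N_G(v_1)\cup N_G(v_2))|\ge 2$. If $H_{13}+v_1v_3$ is the extended double-clique with $v_1v_3$ an outer ($C_4$) edge, it contributes \emph{zero} common neighbours of $v_1$ and $v_3$, and $H_{23}$ may supply at most one. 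The overlap that the degree bound forces between $N_{H_{12}}(v_1)$ and $N_{H_{13}}(v_1)$ is a vertex $w\in N_G(v_1)\cap N_G(v_2)$; it is \emph{not} in $N_G(v_3)$ (in the extended double-clique the neighbours of $v_1$ and of $v_3$ are disjoint after deleting the outer edge), so it gives no saving in $|N_G(v_3)\setminus(N_G(v_1)\cup N_G(v_2))|$. As written, the inclusion--exclusion does not close.

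The fix is structural rather than local, and it makes the anticipated ``obstacle'' vanish: if $H_{12}$ is exceptional for your subclaim, first try $H_{13}$ and $H_{23}$. Your own analysis shows that the subclaim fails only for $K_4$ and for the extended clique with the two attachment vertices, and the extended double-clique is \emph{never} exceptional; so in your problematic subcase you should simply apply the subclaim to $H_{13}$ instead. If all three witnesses are exceptional, then every pair $\{v_i,v_j\}$ has at least $\Delta-2$ common neighbours, and the paper's three-term inclusion--exclusion gives $|N_G[S]|\le 3(\Delta+1)-2(\Delta-2)=\Delta+7\le 3\Delta-1$ outright, with no appeal to $H_{13},H_{23}$ structure. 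This is exactly the dichotomy the paper uses (``all pairs have $\ge\Delta-2$ common neighbours'' versus ``some pair has $\le\Delta-3$''), and adopting it repairs your argument.
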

\begin{proof}
Suppose for a contradiction that vertices of $S$ are pairwise $\Delta$-adjacent, and in particular $S$ is an independent set.
For $1\le i<j\le 3$, let $H_{ij}$ be the induced subgraph of $G$ showing $\Delta$-adjacency of $v_i$ and $v_j$.

Suppose first that all integers $i$ and $j$ such that $1\le i<j\le 3$ satisfy $|N_G[v_i]\cap N_G[v_j]|\ge \Delta-2$.
Let $Z=N_G[S]$; by the principle of inclusion and exclusion we have
\begin{align*}
|Z|&=\sum_{i=1}^3 |N_G[v_i]|-\sum_{1\le i<j\le 3} |N_G[v_i]\cap N_G[v_j]|+\left|\bigcap_{i=1}^3 N_G[v_i]\right|\\
&\le \sum_{i=1}^3 |N_G[v_i]|-\sum_{i=2}^3 |N_G[v_1]\cap N_G[v_i]|\\
&\le 3(\Delta+1)-2(\Delta-2)=\Delta+7\le 3\Delta-1=\Delta|S|-1.
\end{align*}
It follows that $Z$ is $\Delta$-profitable, which is a contradiction.

Hence, we can assume that say $|N_G[v_1]\cap N_G[v_2]|\le \Delta-3$, and in particular $H'_{12}=H_{12}+v_1v_2$ is
not a clique of size $\Delta$. Consequently, $\Delta\in\{4,5\}$.  Note that $H_{12}'$ is not an extended clique with attachments
$v_1$ and $v_2$, as otherwise we would have $|N_G[v_1]\cap N_G[v_2]|=2=\Delta-2$.

If $H_{12}'$ is not an extended double-clique, then $H_{12}'$ is $C_8^2$, $C_5 \boxtimes K_2$, or the extended clique,
and $|V(H_{12})|=2\Delta$.  A vertex of $V(H_{12})$ may only have neighbors outside of $V(H_{12})$ in $G$ if
its degree in $H_{12}'$ is less than $\Delta$, or if it is one of $v_1$ and $v_2$.
Using this observation, a case analysis shows that $H_{12}$ contains an independent set $S'$ of size three including $v_1$ and $v_2$,
such that $|N_G[S']\setminus V(H_{12})|\le 7-\Delta$.  Letting $Z'=N_G[S']\cup V(H_{12})$, we have
$|Z'|\le 2\Delta+(7-\Delta)=\Delta+7\le 3\Delta-1=\Delta|S'|-1$.
Hence, $Z'$ is $\Delta$-profitable, which is a contradiction.

If $H_{12}'$ is an extended double-clique (and $\Delta=4$), then $|V(H_{12})|=12$ and $H_{12}$ contains an independent set $S'$ of size
$4$ (including $v_1$ and $v_2$) such that $|N_G[S']\setminus V(H_{12})|\le 2$.  Letting $Z'=N_G[S']\cup V(H_{12})$, we have
$|Z'|\le 14=\Delta+10<4\Delta-1=\Delta|S'|-1$.  Again, $Z'$ is $\Delta$-profitable, which is a contradiction.
\end{proof}

King, Lu and Peng~\cite{kinglupeng} proved that every graph $G$ of maximum degree at most $\Delta\ge 4$ and with no
$\Delta$-tight induced subgraph has fractional chromatic number at most $\Delta-2/67$.
This gives a lower bound on the independence number of graphs whose vertices are all $\Delta$-free.
\begin{corollary}\label{cor-king}
Let $\Delta\ge 4$ be an integer, and let $G$ be an $n$-vertex graph of maximum degree at most $\Delta$.
If $G$ contains no $\Delta$-tight induced subgraph, then $\alpha(G)\ge \frac{n}{\Delta} + \frac{1}{34\Delta^2}n$.
\end{corollary}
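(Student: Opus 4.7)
The plan is to feed the King--Lu--Peng fractional chromatic bound directly into the standard inequality $\alpha(G)\ge n/\chi_f(G)$ and then verify that the resulting improvement over $n/\Delta$ is at least $n/(34\Delta^2)$ by a short algebraic calculation.

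First, I would invoke the result of King, Lu and Peng quoted in the paragraph preceding the corollary: since $\Delta\ge 4$, $\Delta(G)\le\Delta$, and $G$ has no $\Delta$-tight induced subgraph (so in particular $\omega(G)\le\Delta$ and $G$ avoids the small sporadic extremal graphs $C_8^2$ and $C_5\boxtimes K_2$), their theorem yields $\chi_f(G)\le \Delta-\tfrac{2}{67}$. The fractional chromatic number always satisfies $\chi_f(G)\ge n/\alpha(G)$ (it is the minimum of $\sum_I x_I$ over fractional colorings, and summing $x_I|I|\le x_I\alpha(G)$ over independent sets gives $n\le \alpha(G)\chi_f(G)$). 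Hence
\[
\alpha(G)\ge \frac{n}{\chi_f(G)}\ge \frac{n}{\Delta-2/67}.
\]

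Next, I would show that $\frac{1}{\Delta-2/67}\ge \frac{1}{\Delta}+\frac{1}{34\Delta^2}$, which finishes the proof. Writing the difference
\[
\frac{1}{\Delta-2/67}-\frac{1}{\Delta}=\frac{2/67}{\Delta(\Delta-2/67)}=\frac{2}{\Delta(67\Delta-2)},
\]
the claim reduces to $\frac{2}{\Delta(67\Delta-2)}\ge \frac{1}{34\Delta^2}$, i.e. $68\Delta\ge 67\Delta-2$, which is trivially true. Multiplying through by $n$ gives $\alpha(G)\ge n/\Delta+n/(34\Delta^2)$, as required.

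There is really no obstacle here: the corollary is essentially a packaging step that converts the King--Lu--Peng fractional chromatic bound into the additive form $n/\Delta+\Omega(n/\Delta^2)$ needed for the linear kernel in Theorem~\ref{thm-approx}. The only mild point worth double-checking is that the hypothesis ``no $\Delta$-tight induced subgraph'' is at least as strong as the hypotheses of King--Lu--Peng's theorem (since $\Delta$-tight graphs include $K_\Delta$, $C_8^2$ and $C_5\boxtimes K_2$), so their conclusion applies with no further adjustment.
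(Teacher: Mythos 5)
Your proof is correct and matches the paper's argument: both invoke the King--Lu--Peng bound $\chi_f(G)\le\Delta-2/67$, combine it with $\alpha(G)\ge n/\chi_f(G)$, and then do the same elementary algebra to show $\frac{2}{\Delta(67\Delta-2)}\ge\frac{1}{34\Delta^2}$. No meaningful difference in approach.
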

\begin{proof}
Since $G$ has fractional chromatic number at most $\frac{67\Delta-2}{67}$, we have
\begin{align*}
\alpha(G)&\ge \frac{67n}{67\Delta-2}=\frac{(67-2/\Delta)n+2n/\Delta}{67\Delta-2}\\
&=\frac{n}{\Delta}+\frac{2n}{\Delta(67\Delta-2)}\ge \frac{n}{\Delta}+\frac{n}{34\Delta^2}.
\end{align*}
\end{proof}

We now extend this result to graphs with only some $\Delta$-free vertices.

\begin{lemma}\label{lemma-free}
Let $\Delta\ge 4$ be an integer, and let $G$ be an $n$-vertex graph with $\max(\Delta(G),\omega(G))\le \Delta$.
If $G$ contains at least $m$ $\Delta$-free vertices, then $\alpha(G)\ge \frac{n}{\Delta} + \frac{1}{34\Delta^2}m$.
\end{lemma}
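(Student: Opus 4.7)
My plan is to prove Lemma~\ref{lemma-free} by induction on $n$. If $G$ has no $\Delta$-tight induced subgraph, then every vertex is $\Delta$-free (so $m=n$) and Corollary~\ref{cor-king} yields the bound directly; this is the base of the induction.

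Otherwise $G$ has some $\Delta$-tight induced subgraph, and I first check whether $G$ admits a $\Delta$-profitable set $Z$ with $|Z|\le \Delta+10$. If so, let $S\subseteq Z$ be the independent set with $N_G[S]\subseteq Z$ and $|S|\ge (|Z|+1)/\Delta$ guaranteed by the definition. The graph $G-Z$ has $n-|Z|$ vertices, at least $m-|Z|$ $\Delta$-free vertices (since deleting vertices never destroys the $\Delta$-freeness of surviving vertices), and still satisfies $\max(\Delta(\cdot),\omega(\cdot))\le \Delta$. The inductive hypothesis applied to $G-Z$ combined with $S$ yields
\[
\alpha(G) \;\ge\; \alpha(G-Z) + |S| \;\ge\; \frac{n}{\Delta} + \frac{m}{34\Delta^2} + \frac{1}{\Delta} - \frac{|Z|}{34\Delta^2},
\]
and the last two terms are non-negative because $|Z|\le \Delta+10\le 34\Delta$.

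So assume no such small $\Delta$-profitable set exists; by Lemma~\ref{lemma-noadj}, no three vertices of $G$ are pairwise $\Delta$-adjacent. Pick any $\Delta$-tight induced subgraph $H$ of $G$. If $H$ is not a clique $K_\Delta$ (so $\Delta\in\{4,5\}$ and $H$ is $C_5\boxtimes K_2$, $C_8^2$, an extended clique, or an extended double-clique), then direct inspection of each of these four graphs shows that $H$ admits an independent set of size exactly $|V(H)|/\Delta$ using only vertices of degree $\Delta$ in $H$. Such vertices have no neighbor in $V(G)\setminus V(H)$ (by the degree hypothesis on $G$), so this independent set combines freely with any independent set of $G-V(H)$. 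Because $V(H)$ contains no $\Delta$-free vertex, applying induction to $G-V(H)$ (with the same $m$) closes this subcase.

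The hard part is the remaining subcase, where every $\Delta$-tight induced subgraph of $G$ is a clique $K_\Delta$. Applying Lemma~\ref{lemma-reducl} to such a clique $K$ gives $\alpha(G)\ge \alpha(G_0)+1$; in cases 1 and 2 of that lemma, $G_0=G-V(K)$ has $n-\Delta$ vertices, no new edges, preserves the degree and clique-number bounds, and retains all $m$ $\Delta$-free vertices, so induction closes these cases immediately. Case 3 is the real obstacle, since the edge $u'v'$ added to obtain $G_0$ could in principle create a $K_{\Delta+1}$ (ruining the clique-number hypothesis needed for induction) or drop a previously $\Delta$-free vertex into a newly minted $\Delta$-tight induced subgraph of $G_0$. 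My plan is to argue, using the no-three-pairwise-$\Delta$-adjacent conclusion of Lemma~\ref{lemma-noadj} together with the assumption that every $\Delta$-tight subgraph of $G$ is a clique $K_\Delta$, that one can always pick the clique $K$ so that the edge addition in case~3 is benign; and that if no such choice existed, then the rigid common-neighborhood structure forced on $u'$, $v'$, and their neighbors in $K$ would itself yield a $\Delta$-profitable set of size at most $\Delta+10$, contradicting the sub-case assumption and closing the induction.
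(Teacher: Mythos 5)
Your induction framework, the base case via Corollary~\ref{cor-king}, the $\Delta$-profitable reduction, and the removal of a non-clique $\Delta$-tight subgraph all match what the paper does, up to reordering. The gap is precisely in the last subcase you flag---when the only $\Delta$-tight induced subgraphs of $G$ are cliques $K_\Delta$ and there is no small $\Delta$-profitable set---and your plan for closing it would not succeed as described.

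Fix any clique $K$ (no clever choice is needed) and let $S$ be the external neighbors of $K$; after cases~1 and~2 of Lemma~\ref{lemma-reducl} you may assume $S$ is independent, every vertex of $K$ meets $S$, and $|S|\ge 2$. If some $u,v \in S$ are \emph{not} $\Delta$-adjacent in $G-V(K)$, then $G_0 = G - V(K) + uv$ still has at least $m$ $\Delta$-free vertices (any new $\Delta$-tight subgraph of $G_0$ would have to use $uv$ and would witness $\Delta$-adjacency), and $G_0$ cannot acquire a $K_{\Delta+1}$ through $uv$: that would force $u,v$ to have $\Delta-1$ pairwise adjacent common neighbors and exactly one neighbor each in $K$, making $N_G[\{u,v\}]$ a $\Delta$-profitable set of size $\Delta+3\le 2\Delta-1$, which you already excluded. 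So case~3 applies benignly and induction goes through. The real issue is when \emph{every} pair in $S$ is $\Delta$-adjacent in $G-V(K)$: Lemma~\ref{lemma-noadj} then only yields $|S|\le 2$, hence $|S|=2$, and at that point there is no profitable set to be found. The paper's contradiction here is structural, not via profitability: each of $u,v$ has degree at least $\Delta-2$ in $G-V(K)$ (minimum degree of a $\Delta$-tight graph is $\Delta-1$), so at most $4$ edges leave $S$ towards $K$; combined with the $\Delta$ edges leaving $K$, this forces $\Delta=4$ and exactly two $K$-edges per vertex of $S$. Then the $4$-tight graph $H$ in $G-V(K)+uv$ through $uv$ must be $K_4$ or the extended clique with $u,v$ as attachments (they have degree at most $3$ in $H$), and $G[V(K)\cup V(H)]$ turns out to be an extended clique or an extended double-clique, a non-clique $4$-tight induced subgraph of $G$---contradicting the standing reduction ``only cliques are $\Delta$-tight.'' Note that neither the extended clique ($8$ vertices, $\alpha=2$) nor the extended double-clique ($12$ vertices, $\alpha=3$) is $\Delta$-profitable, so hunting for a small profitable set in this final subcase is a dead end; you must retarget the contradiction at the cliques-only reduction.
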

\begin{proof}
We proceed by induction, and thus we can assume that the claim holds for all graphs with less than $n$ vertices.

If $H$ is a $\Delta$-tight induced subgraph of $G$ that is not a clique, then observe that $\alpha(H)=t$ for some $t\in\{2,3\}$
and that $H$ has an independent set of size $t$ whose closed neighborhood in $G$ is contained in $V(H)$.
Consequently $\alpha(G)=\alpha(G-V(H))+t$.  Furthermore, $|V(H)|=t\Delta$ and $G-V(H)$ has
at least $m$ $\Delta$-free vertices, and thus
$\alpha(G)=\alpha(G-V(H))+t\ge \frac{n-t\Delta}{\Delta}+\frac{1}{34\Delta^2}m+t=\frac{n}{\Delta} + \frac{1}{34\Delta^2}m$
by the induction hypothesis.
Hence, we can without loss of generality assume that the only $\Delta$-tight induced subgraphs of $G$ are cliques of size $\Delta$.

Suppose that $G$ contains a $\Delta$-profitable set $Z$ of size at most $\Delta+10$, and let $S$ be the corresponding independent set.
Note that the number of $\Delta$-free vertices of $G-Z$ is at least $m-|Z|$, and thus
\begin{align*}
\alpha(G)&\ge \alpha(G-Z)+|S|\\
&\ge \frac{n-|Z|}{\Delta}+\frac{1}{34\Delta^2}(m-|Z|)+|S|\\
&=\frac{n}{\Delta}+\frac{1}{34\Delta^2}m+\frac{\Delta|S|-|Z|-|Z|/(34\Delta)}{\Delta}.
\end{align*}
Since $Z$ is $\Delta$-profitable, we have $\Delta|S|-|Z|\ge 1>\frac{|Z|}{34\Delta}$,
and thus the inequality we seek holds.

If $\omega(G)<\Delta$, then all vertices of $G$ are $\Delta$-free and the claim follows from Corollary~\ref{cor-king}.
It remains to consider the case that $G$ contains a clique $K$ of size $\Delta$, but does not contain any $\Delta$-profitable sets
with at most $\Delta+10$ vertices.
Let $S$ be the set of vertices outside $K$ with a neighbor in $K$.  If $K$ contains a vertex whose degree in $G$ is
$\Delta-1$, or if $S$ is not an independent set, then note that $G-V(K)$ contains at least $m$ $\Delta$-free vertices,
and by Lemma~\ref{lemma-reducl} and the induction hypothesis, we have
$\alpha(G)\ge \alpha(G-V(K))+1\ge \frac{n-\Delta}{\Delta}+\frac{1}{34\Delta^2}m+1=\frac{n}{\Delta}+\frac{1}{34\Delta^2}m$.
Hence, suppose that $S$ is an independent set and each vertex of $K$ has a neighbor in $S$ (and thus there are precisely $\Delta$
edges between $V(K)$ and $S$).

Since $\omega(G)\le\Delta$, we have $|S|\ge 2$.  If some two vertices $u,v\in S$ are not $\Delta$-adjacent in $G-V(K)$, then
$G_0=G-V(K)+uv$ has at least $m$ $\Delta$-free vertices, and the inequality follows again by Lemma~\ref{lemma-reducl}
and the induction hypothesis applied to $G_0$.  Hence, suppose that the vertices of $S$ are pairwise $\Delta$-adjacent.
By Lemma~\ref{lemma-noadj}, we have $|S|=2$.  Since the vertices of $S$ are $\Delta$-adjacent in $G-V(K)$, they have degree
at least $\Delta-2$ in $G-V(K)$.  Since their degree in $G$ is at most $\Delta$, the number of edges between $V(K)$ and $S$ is at most $4$.  Therefore, there are
exactly $4=\Delta$ edges between $V(K)$ and $S$ and both vertices $u$ and $v$ of $S$ have degree exactly $\Delta-2$ in $G-V(K)$.

Let $H$ be the $4$-tight induced subgraph in $G-V(K)+uv$ containing the edge $uv$.  Since both $u$ and $v$
have degree at most $\Delta-1$ in $H$, we conclude that $H$ is either the clique on $4$ vertices or the extended clique
with attachments $u$ and $v$.  In the former case, $H'=G[V(K)\cup V(H)]$ is an extended clique
with the common neighbors of $u$ and $v$ in $H$ as attachments.  In the latter case, $H'=G[V(K)\cup V(H)]$ is
an extended double-clique.  In both cases, $H'$ is a $4$-tight induced subgraph of $G$ distinct from a clique, which is a contradiction.
\end{proof}

Next, we prove a similar claim for graphs of maximum degree at most $3$.  Staton~\cite{Sta79}
proved that every subcubic triangle-free $n$-vertex graph has independence number at least $5n/14$.
In particular, we have the following.

\begin{corollary}\label{cor-dsv}
Let $G$ be an $n$-vertex graph of maximum degree at most $3$.
If $G$ contains no 3-tight induced subgraph, then $\alpha(G)\ge \frac{n}{3} + \frac{1}{42}n$.
\end{corollary}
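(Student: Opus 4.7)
The plan is to observe that the hypothesis is actually a strengthening of triangle-freeness, and then to invoke the Staton bound already mentioned in the preceding paragraph.

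First I would note that the clique $K_3$ on three vertices is listed among the $3$-tight graphs in the definition. Consequently, any graph $G$ that contains no $3$-tight induced subgraph contains no induced $K_3$, and since $K_3$ being induced is the same as $K_3$ being a subgraph, $G$ is triangle-free. Combined with the hypothesis that $\Delta(G)\le 3$, this places $G$ in the class of subcubic triangle-free graphs.

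Next I would apply Staton's theorem~\cite{Sta79}, cited immediately before the corollary, which gives $\alpha(G)\ge 5n/14$ for every $n$-vertex subcubic triangle-free graph.

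The remaining step is the arithmetic identity
\begin{equation*}
\frac{5n}{14}=\frac{15n}{42}=\frac{14n}{42}+\frac{n}{42}=\frac{n}{3}+\frac{n}{42},
\end{equation*}
so the stated inequality follows. There is no real obstacle here: the work was done in establishing the list of $3$-tight graphs (so that "no $3$-tight induced subgraph" forces triangle-freeness) and in Staton's theorem; the corollary is a direct packaging of these two facts in the form needed for the main theorem.
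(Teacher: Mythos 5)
Your proof is correct and matches the paper's approach exactly: the paper introduces Staton's $5n/14$ bound in the sentence immediately preceding the corollary and presents the corollary as an immediate consequence, relying on the same observation that the absence of $3$-tight induced subgraphs (which include $K_3$) forces triangle-freeness, together with the arithmetic $5/14 = 1/3 + 1/42$.
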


We aim to generalize this result to graphs containing 3-tight subgraphs.
In comparison with Lemma~\ref{lemma-free}, we will actually need to strengthen
the claim even more and gain also for some of the vertices contained in
triangles.

A \emph{diamond} is an induced subgraph isomorphic to a clique on $4$ vertices minus one edge.
A \emph{necklace} is either a diamond necklace or a Havel necklace.
A diamond in a graph $G$ is \emph{free} if it is not contained in an induced subgraph of $G$ isomorphic to a necklace.
For a subcubic graph $G$, let $m(G)=m_1+m_2$, where $m_1$ is the number of $3$-free vertices of $G$ and $m_2$ is the
number of free diamonds of $G$.  Note that the same bound was obtained in~\cite{kanjzhang} under more restrictive
assumptions (only for diamond-free graphs without Havel necklaces and triangle-dominated $6$-cycles).

\begin{lemma}\label{lemma-free3}
If $G$ is an $n$-vertex graph with $\max(\Delta(G),\omega(G))\le 3$,
then $\alpha(G)\ge \frac{n}{3} + \frac{1}{42}m(G)$.
\end{lemma}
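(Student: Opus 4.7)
The plan is to induct on $n$, mirroring the proof of Lemma~\ref{lemma-free} but with two new ingredients: tracking free diamonds in addition to $3$-free vertices in $m(G)$, and handling the richer list of $3$-tight graphs in Figure~\ref{fig-tight}(e)--(g). Throughout we use that every non-clique $3$-tight graph is $3$-regular, which is immediate by inspection of the figure.

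The first reduction is as in Lemma~\ref{lemma-free}: if $G$ contains a non-clique $3$-tight induced subgraph $H$, peel off $V(H)$. Since $H$ is $3$-regular, its maximum independent sets have closed neighborhood contained in $V(H)$ and size $|V(H)|/3$, and no diamond can straddle $V(H)$. Every vertex of $V(H)$ lies in $H$ (hence is not $3$-free) and every diamond inside $V(H)$ lies inside the necklace structure (hence is not free), so $m(G-V(H))\ge m(G)$ and the inductive bound closes. The second reduction handles small $3$-profitable sets: if $G$ has a $3$-profitable set $Z$ of size bounded by a suitable constant, apply induction to $G-Z$. The profit $|S|-|Z|/3\ge 1/3$ must absorb the $m$-loss scaled by $\tfrac{1}{42}$: at most $|Z|$ from $3$-free vertices removed, and $O(|Z|)$ from free diamonds meeting $Z$ (each subcubic vertex lies in a bounded number of diamonds). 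For a small enough constant bound on $|Z|$ the arithmetic closes.

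Having eliminated both situations, we split on whether $G$ contains a triangle. If $G$ is triangle-free then $G$ has no $3$-tight induced subgraph and no diamond, so $m(G)=n$, and Corollary~\ref{cor-dsv} gives
\[
\alpha(G)\ge \tfrac{5n}{14}=\tfrac{n}{3}+\tfrac{n}{42}=\tfrac{n}{3}+\tfrac{m(G)}{42}.
\]
If $G$ contains a triangle $K$, we mimic the clique-step of Lemma~\ref{lemma-free}. Let $S$ be the set of external neighbors of $V(K)$. If some vertex of $K$ has degree $2$, or $S$ has two adjacent vertices, or $S$ has two distinct non-adjacent vertices, Lemma~\ref{lemma-reducl} reduces $G$; care is required in the edge-addition case so as not to create spurious $3$-tight subgraphs that would decrease $m$, but one checks that the added edge either lies inside a clique already present or leaves $3$-freeness intact. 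Otherwise either two external neighbors of $K$ coincide (producing a diamond $D\supseteq V(K)$) or $|S|=3$ and $G[S]$ is a triangle (producing a prism). In the diamond subcase, if $D$ is free one extracts a small $3$-profitable set using $D$ together with the local structure at its degree-$2$ vertices, contradicting the second reduction; if $D$ lies in a necklace the first reduction applies. In the prism subcase, propagating the analysis around the prism, the only configurations avoiding both Lemma~\ref{lemma-reducl} and a small profitable set are precisely those in which $G$ already contains a Havel necklace, a diamond necklace, or a triangle-dominated $6$-cycle, again contradicting the first reduction.

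The main obstacle I expect is this final classification around triangles: essentially an analogue of Lemma~\ref{lemma-noadj} for $\Delta=3$, followed by an exhaustive check that every subcubic graph containing a triangle, no non-clique $3$-tight subgraph, and no small profitable set must already contain one of the structures in Figure~\ref{fig-tight}(e)--(g). The combination of diamond tracking in $m(G)$ and the richer zoo of $3$-tight graphs makes the case work substantially more delicate than the $\Delta\ge 4$ analysis of Lemma~\ref{lemma-free}, and is essentially a structural classification of how cliques and diamonds can interact in a subcubic graph with no profitable local reduction.
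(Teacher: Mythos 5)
Your opening reduction (peeling off non-clique $3$-tight induced subgraphs) matches the paper, and the triangle-free base case via Corollary~\ref{cor-dsv} is the same. But from there your plan diverges from what the paper actually does, and it leaves the genuinely hard part of the argument unproved.

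First, the paper does not use $\Delta$-profitable sets in the $\Delta=3$ proof. The role you assign to a ``small profitable set'' reduction is played in the paper by a \emph{free-diamond contraction} step: if $G$ contains a diamond $H$ (necessarily free after the first reduction), one contracts $V(H)$ to a single vertex $v$, verifies $v$ is $3$-free in the contracted graph $G'$, concludes $m(G')\ge m(G)$, and uses $\alpha(G)\ge\alpha(G')+1$. Your sketch instead says ``if $D$ is free one extracts a small $3$-profitable set using $D$ together with the local structure at its degree-$2$ vertices,'' but a free diamond in a subcubic graph with no degree-$2$ vertices and no profitable enclosure nearby does not obviously yield a profitable set; the contraction step is what sidesteps this. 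After that step the paper may assume $G$ is diamond-free, which is then used repeatedly (to show the $u_i$ are distinct, that certain identifications don't create $K_4$, etc.). Your outline never establishes diamond-freeness, and some of your later claims implicitly rely on it.

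Second, your treatment of the non-adjacent-pair case is wrong as stated. You claim that after adding an edge $u_iu_j$ via Lemma~\ref{lemma-reducl}, ``one checks that the added edge either lies inside a clique already present or leaves $3$-freeness intact.'' This is false precisely when $u_i$ and $u_j$ are $3$-adjacent: then $G-V(H)+u_iu_j$ acquires a new $3$-tight induced subgraph containing the added edge, and the $m$-bookkeeping breaks. The paper has to handle this case explicitly by showing that $3$-adjacency of $u_i,u_j$ (with no common neighbor, in a diamond-free graph) forces a concrete $14$-vertex structure $H_1$ (Figure~\ref{fig-exthavel}) with an independent set of size $5$ whose closed neighborhood lies inside $H_1$, and then carefully bounding the $m$-loss at $6$.

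Third, and most importantly, the core of the proof is the case analysis when every pair $u_i,u_j$ has a common neighbor: whether there is a single vertex $x$ adjacent to all of $U$, subcases on $|N_G[U]|$, the edge addition $y_1y_2$, and the identification of $x_{12},x_{13},x_{23}$ (Figure~\ref{fig-cases}). Each requires verifying $K_4$-freeness of the modified graph, bounding the $m$-loss (which requires counting how many necklaces or triangle-dominated $6$-cycles the new vertex or edge can participate in, again using diamond-freeness), and lifting an independent set of the modified graph back to $G$. You explicitly defer this — ``The main obstacle I expect is this final classification\dots'' — but this is not an obstacle one can defer; it \emph{is} the proof. Your ``prism subcase, propagating the analysis around the prism'' does not correspond to how the paper actually proceeds (the $u_i$ being pairwise adjacent is already dispatched by Lemma~\ref{lemma-reducl}; the remaining case is $U$ independent with shared common neighbors). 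As written, the proposal identifies the shape of the argument but omits the structural lemmas and the delicate $m(G)$ bookkeeping that make it go through.
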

\begin{proof}
We proceed by induction, and thus we can assume that the claim holds for all graphs with less than $n$ vertices.

If $H$ is an induced subgraph of $G$ isomorphic to a necklace or a triangle-dominated $6$-cycle, then observe that $t=\alpha(H)\in\{2,3,4\}$
and $H$ has an independent set of size $t$ whose closed neighborhood in $G$ is contained in $V(H)$.
Therefore, $\alpha(G)=\alpha(G-V(H))+t$.  Furthermore, no vertex of $V(H)$ is $3$-free or contained in a free diamond of $G$,
and thus $m(G-V(H))\ge m(G)$.
Observe that $|V(H)|=3t$, and hence $\alpha(G)=\alpha(G-V(H))+t\ge \frac{n-3t}{3}+\frac{1}{42}m(G-V(H))+t\ge \frac{n}{3} + \frac{1}{42}m(G)$
by the induction hypothesis.
Hence, we can without loss of generality assume that the only $3$-tight induced subgraphs of $G$ are triangles.

Suppose that $H$ is a diamond in $G$, which is necessarily free.  Let $G'$ be obtained from $G$ by contracting all vertices of $H$ into a single vertex $v$.
Observe that $v$ has degree at most 2 in $G'$, and thus $G'$ is $K_4$-free.  If $v$ were contained in either a necklace or a triangle in $G'$,
then $H$ would be a part of a necklace in $G$, contradicting the conclusion of the previous paragraph.
Consequently, we replaced a free diamond $H$ of $G$ by a $3$-free vertex $v$ in $G'$, and thus $m(G')\ge m(G)$.
Moreover, for every independent set $S'$ in $G'$ there exists an independent set $S$ in $G$ such that $|S| = |S'|+1$
(if $v\in S'$, we can add the two non-adjacent vertices of $H$ to $S$ instead; if $v\not\in S'$, then we can
add one of the vertices whose degree in $H$ is $3$ to $S$). Hence,
$\alpha(G) \geq \alpha(G')+1\ge \frac{n-3}{3}+\frac{1}{42}m(G')+1\ge \frac{n}{3} + \frac{1}{42}m(G)$.
Therefore, we can without loss of generality assume that $G$ is diamond-free.

If $G$ is triangle-free, then $\alpha(G)\ge \frac{n}{3} + \frac{1}{42}n=\frac{n}{3} + \frac{1}{42}m(G)$ by Corollary~\ref{cor-dsv}.
Hence, suppose that $H$ is a triangle in $G$, with vertices $v_1$, $v_2$, and $v_3$.
If say $v_1$ has degree two in $G$, then
every independent set in $G-V(H)$ can be extend to an independent set in $G$
by including $v_1$. Moreover, none of the vertices in $V(H)$ is 3-free or in a free diamond,
which gives $m(G-V(H))\ge m(G)$.
Hence,
$\alpha(G) \geq \alpha(G - V(H)) + 1 \geq \frac{n-3}{3}+  \frac{1}{42}m(G-V(H)) + 1
\ge   \frac{n}{3}+  \frac{1}{42}m(G)$.
Therefore, we can without loss of generality assume that $V(H)$ contains only vertices of degree three.

For $i \in \{1,2,3\}$, let $u_i$ denote the neighbor of $v_i$ not in $V(H)$.
Since $G$ is diamond-free, $u_i$ and $u_j$ are distinct for all $1 \leq i < j \leq 3$. 
If $u_i$ is adjacent to $u_j$ for some $1 \leq i < j \leq 3$, then Lemma~\ref{lemma-reducl} and the induction hypothesis give
$\alpha(G) \geq \alpha(G - V(H)) + 1 \geq \frac{n-3}{3}+  \frac{1}{42}m(G-V(H)) + 1 \ge   \frac{n}{3}+  \frac{1}{42}m(G)$.
Hence, assume that $U=\{u_1,u_2,u_3\}$ is an independent set in $G$.

\begin{figure}
\begin{center}
\tikzset{vblack/.style={inner sep=1.7pt, outer sep=0pt, circle, fill},
 vwhite/.style={inner sep=1.7pt, outer sep=0pt, circle, fill=white,draw},}
\begin{tikzpicture}[scale=1.5]
\begin{scope}
\draw (0.15,0) node[vwhite](a){}
--+(45:0.3) node[vwhite](b){}
--+(-45:0.3) node[vblack](c){}--(a)
--++(-0.3,0) node[vwhite](x){}
--+(135:0.3) node[vwhite](y){}
--+(225:0.3) node[vblack](z){}--(x)
(0,0.7) node[vblack,label=above:$u_i$](k){}
(0,-0.7) node[vwhite](l){}
(b)--(k)--(y)
(c)--(l)--(z)
(1,-0.5) node[vblack,label=below:$u_j$](q){}
(q)--(l)
(1.5,0) node[vblack](x){}
--+(135:0.3) node[vwhite](y){}
--+(225:0.3) node[vwhite](z){}--(x)
(z)--(q)
(k)--(y)
(q) -- ++(0.3,0) node[vwhite]{}
(x) -- ++(0.3,0) node[vwhite]{}
;
\draw[dotted,very thick]
(k) to[bend left=15](q)
;
\end{scope}
\begin{scope}[xshift=4cm]
\draw (0.15,0) node[vblack](a){}
--+(45:0.3) node[vwhite](b){}
--+(-45:0.3) node[vwhite](c){}--(a)
--++(-0.3,0) node[vwhite](x){}
--+(135:0.3) node[vwhite](y){}
--+(225:0.3) node[vblack,label=below:$u_j$](z){}--(x)
(0,0.7) node[vblack](k){}
(0,-0.7) node[vblack,label=below:$u_i$](l){}
(b)--(k)--(y)
(c)--(l)
(1,0) node[vwhite](q){}
(k)--(q)--(l)
(-1,-0.5) node[vblack](a){}
--+(45:0.3) node[vwhite](b){}
--+(-45:0.3) node[vwhite](c){}--(a)
(c)--(l)
(z)--(b)
(a) -- ++(-0.3,0) node[vwhite]{}
;
\draw[dotted,very thick]
(l) to[bend left=0](z)
;
\end{scope}
\begin{scope}[xshift=7cm]
\draw (0.5,0) node[vblack,label=above:$u_i$](a){}
--+(45:0.3) node[vwhite](b){}
--+(-45:0.3) node[vwhite](c){}--(a)
(-0.5,0) node[vblack,label=above:$u_j$](x){}
--+(135:0.3) node[vwhite](y){}
--+(225:0.3) node[vwhite](z){}--(x)
(0,0.9) node[vblack,label=above:$ $](k){}
(0,-0.9) node[vblack](l){}
(b)--(k)--(y)
(c)--(l)--(z)
(1.2,0) node[vwhite](q){}
(k) to[bend left] (q) (q) to[bend left] (l)
(0,0.15) node[vblack](m){}
--+(-45:0.3) node[vwhite](n){}
--+(45:-0.3) node[vwhite](o){}--(m)
(a)--(n)
(x)--(o)
(m) -- ++(0,0.3) node[vwhite]{}
;
\draw[dotted,very thick]
(x) to[bend left=-50, looseness=1.5](a)
;
\end{scope}
\end{tikzpicture}\\[10pt]
\begin{tikzpicture}[scale=0.8]
\draw
(-2,0) node[vblack](z2){}
(-3,-1) node[vwhite](z1){}
(-3,1) node[vwhite](z3){}
(z1)--(z2)--(z3)--(z1)
(2,0) node[vwhite](w2){}
(3,-1) node[vwhite](w1){}
(3,1) node[vblack](w3){}
(w1)--(w2)--(w3)--(w1)
\foreach \x in {1,2,3}{
(-0.6,-2+\x) node[vwhite](x\x){}--(z\x)
(0.6,-2+\x) node[vblack](y\x){}
}
(y1)--(w1)
(y2)--(w2)
(x1) -- (y1)--(x2)--(y3)--(x3)--(y2)--(x1)
(w3) -- ++(0,0.7) node[vwhite]{}
(y3) -- ++(0,0.7) node[vwhite]{}
(y3) node[vblack,label=above left:$u_i$]{}
(w3) node[vblack,label=right:$u_j$]{}
;
\draw[dotted,very thick]
(w3)--(y3)
;
\end{tikzpicture}
\hskip 2em
\begin{tikzpicture}[scale=0.8]
\draw
(-2,0) node[vblack](z2){}
(-3,-1) node[vwhite](z1){}
(-3,1) node[vwhite](z3){}
(z1)--(z2)--(z3)--(z1)
(2,0) node[vblack](w2){}
(3,-1) node[vwhite](w1){}
(3,1) node[vwhite](w3){}
(w1)--(w2)--(w3)--(w1)
\foreach \x in {1,2,3}{
(-0.6,-2+\x) node[vblack](x\x){}--(z\x)
(0.6,-2+\x) node[vblack](y\x){}--(w\x)
}
(x2) node[vwhite]{}
(y2) node[vwhite]{}
(y1) node[vwhite]{}
(x1) -- (y1)--(x2)--(y3) (x3)--(y2)--(x1)
(x3) -- ++(0,0.7) node[vwhite]{}
(y3) -- ++(0,0.7) node[vwhite]{}
(x3) node[vblack,label=above left:$u_i$]{}
(y3) node[vblack,label=above right:$u_j$]{}
;
\draw[dotted,very thick]
(x3)--(y3)
;
\end{tikzpicture}
\end{center}
\caption{Subgraphs arising from a Havel necklace or a triangle-dominated $6$-cycle.}\label{fig-exthavel}
\end{figure}
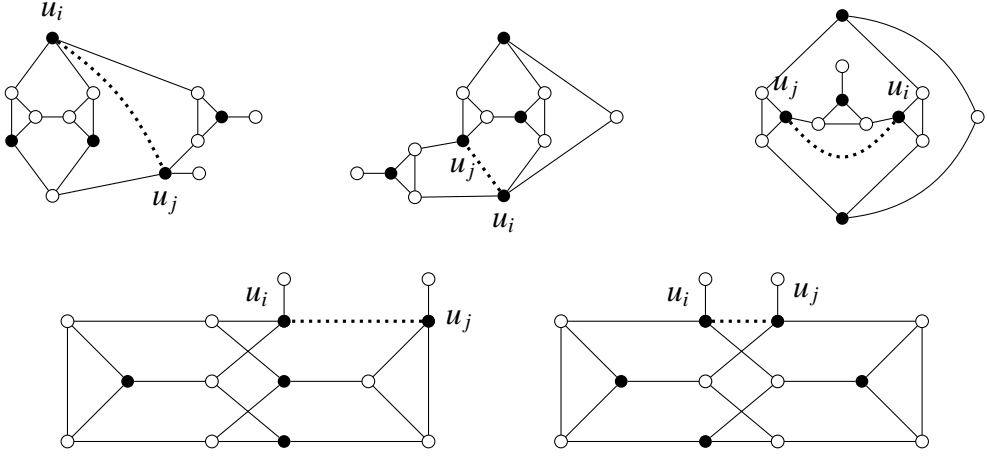

Suppose that $u_i$ and $u_j$ have no common neighbor for some $1 \leq i < j \leq 3$.
If $u_i$ is not $3$-adjacent to $u_j$ in $G$, then
let $G'$ be the graph obtained from $G-V(H)$ by adding the edge $u_iu_j$.
We have $m(G)\ge m(G')$, and
Lemma~\ref{lemma-reducl} together with the induction hypothesis give
$\alpha(G) \geq \alpha(G') + 1 \geq \frac{n-3}{3}+  \frac{1}{42}m(G') + 1 \ge \frac{n}{3}+  \frac{1}{42}m(G)$.
Hence, we can assume that $u_i$ and $u_j$ are $3$-adjacent in $G$.
Since $G$ is diamond-free and $u_i$ and $u_j$ do not have a common neighbor,
we conclude that $G$ contains an induced subgraph $H_0$ containing $u_i$ and $u_j$
such that $H_0+u_iu_j$ is isomorphic to the diamond-free Havel necklace or the triangle-dominated $6$-cycle,
and $G$ contains a subgraph $H_1$ isomorphic to a graph that is either depicted in Figure~\ref{fig-exthavel}
or obtained from one of the depicted graphs by identifying a vertex of degree $1$ with another vertex of degree at most $2$.
Note that $|V(H_1)|\le 14$, and as shown in the figure, $H_1$ contains an independent set of size $5$
whose closed neighborhood in $G$ is contained in $V(H_1)$.  Since at most $6$ vertices of $V(H_1)$
are $3$-free in $G$, we have $m(G-V(H_1))\ge m(G)-6$.  Together with the induction hypothesis,
we obtain
$\alpha(G) \geq \alpha(G-V(H_1)) + 5 \geq \frac{n-14}{3}+  \frac{1}{42}m(G-V(H_1)) + 5 \ge \frac{n+1}{3}+  \frac{1}{42}(m(G)-6)> \frac{n}{3}+  \frac{1}{42}m(G)$.

Therefore, we can assume that $u_i$ and $u_j$ have a common neighbor for all $1 \leq i < j \leq 3$.
There are two cases---either these common neighbors are pairwise different, or there
exists a common neighbor of all vertices of $U$.
We first consider the case that there exists a vertex $x$ adjacent to $u_1$, $u_2$, and $u_3$.
We distinguish two subcases.

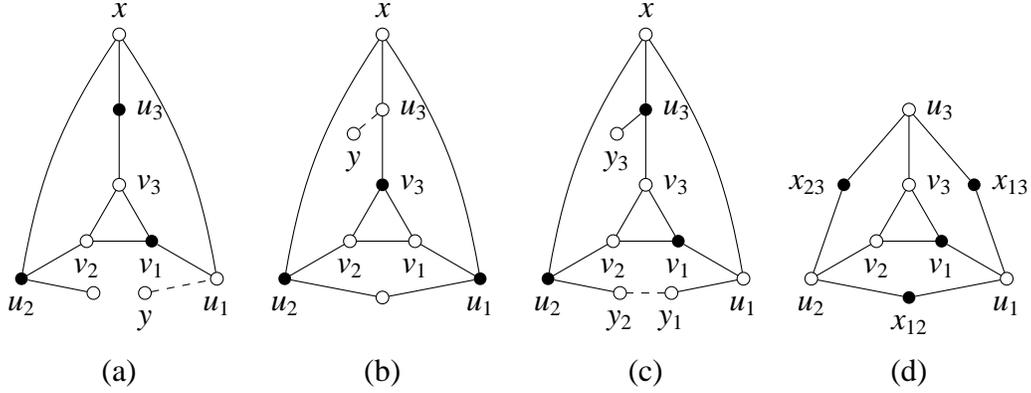
\begin{figure}
\begin{center}
\tikzset{vblack/.style={inner sep=1.7pt, outer sep=0pt, circle, fill},
 vwhite/.style={inner sep=1.7pt, outer sep=0pt, circle, fill=white,draw},}
\begin{tikzpicture}[scale=0.5]
\begin{scope}
\draw
(0,0)
 +(90:1) node[vwhite,label=right:$v_3$](v3){}
-- +(90+120:1) node[vwhite,label=below:$v_2$](v2){}
-- +(90+240:1) node[vblack,label=below:$v_1$](v1){} -- (v3)
(v3)--   +(90:2) node[vblack,label=right:$u_3$](u3){}
(v2)-- +(90+120:2) node[vblack,label=below:$u_2$](u2){}
(v1)-- +(90+240:2) node[vwhite,label=below:$u_1$](u1){} 
(u3)--++(90:2) node[vwhite,label=above:$x$](x){}
(u2) to[bend left=12] (x) (x)to[bend left=12](u1)
(250:2) node[vwhite](z){}
(u2)--(z) 
(290:2) node[vwhite,label=below:$y$](zz){}
(270:4) node{(a)}
;
\draw[dashed]
(zz)--(u1);
\end{scope}
\begin{scope}[xshift=7cm]
\draw
(0,0)
 +(90:1) node[vblack,label=right:$v_3$](v3){}
-- +(90+120:1) node[vwhite,label=below:$v_2$](v2){}
-- +(90+240:1) node[vwhite,label=below:$v_1$](v1){} -- (v3)
(v3)--   +(90:2) node[vwhite,label=right:$u_3$](u3){}
(v2)-- +(90+120:2) node[vblack,label=below:$u_2$](u2){}
(v1)-- +(90+240:2) node[vblack,label=below:$u_1$](u1){} 
(u3)--++(90:2) node[vwhite,label=above:$x$](x){}
(u2) to[bend left=12] (x) (x)to[bend left=12](u1)
(90:-2) node[vwhite](z){}
(u2)--(z)--(u1)
(u3) ++(220:1)node[vwhite,label=below:$y$](y){}
(270:4) node{(b)}
;
\draw[dashed]
(u3)--(y);
\end{scope}
\begin{scope}[xshift=14cm]
\draw
(0,0)
 +(90:1) node[vwhite,label=right:$v_3$](v3){}
-- +(90+120:1) node[vwhite,label=below:$v_2$](v2){}
-- +(90+240:1) node[vblack,label=below:$v_1$](v1){} -- (v3)
(v3)--   +(90:2) node[vblack,label=right:$u_3$](u3){}
(v2)-- +(90+120:2) node[vblack,label=below:$u_2$](u2){}
(v1)-- +(90+240:2) node[vwhite,label=below:$u_1$](u1){} 
(u3)--++(90:2) node[vwhite,label=above:$x$](x){}
(u2) to[bend left=12] (x) (x)to[bend left=12](u1)
(250:2) node[vwhite,label=below:$y_2$](y2){}
(u2)--(y2) 
(290:2) node[vwhite,label=below:$y_1$](y1){}
(y1)--(u1)
(u3)--++(220:1)node[vwhite,label=below:$y_3$](y3){}
(270:4) node{(c)}
;
\draw[dashed]
(y1)--(y2);
\end{scope}
\begin{scope}[xshift=21cm]
\draw
(0,0)
 +(90:1) node[vwhite,label=right:$v_3$](v3){}
-- +(90+120:1) node[vwhite,label=below:$v_2$](v2){}
-- +(90+240:1) node[vblack,label=below:$v_1$](v1){} -- (v3)
(v3)--   +(90:2) node[vwhite,label=right:$u_3$](u3){}
(v2)-- +(90+120:2) node[vwhite,label=below:$u_2$](u2){}
(v1)-- +(90+240:2) node[vwhite,label=below:$u_1$](u1){} 
(270:2) node[vblack,label=below:$x_{12}$](x12){}
(u1)--(x12)--(u2)
(270+120:2) node[vblack,label=right:$x_{13}$](x13){}
(u1)--(x13)--(u3)
(270+240:2) node[vblack,label=left:$x_{23}$](x23){}
(u3)--(x23)--(u2)
(270:4) node{(d)}
;
\end{scope}
\end{tikzpicture}
\end{center}
\caption{Cases in Lemma~\ref{lemma-free3}.}\label{fig-cases}
\end{figure}

\begin{itemize}
\item The first subcase is that either one of vertices of $U$ has degree two (see Figure~\ref{fig-cases}(a)), or two vertices of $U$ have
a common neighbor distinct from $x$ (see Figure~\ref{fig-cases}(b)), and consequently $|N_G[U]| \leq 9$. 
If $|N_G[U]| \leq 8$, then let $Z=N_G[U]$.
If $|N_G[U]| = 9$, then let $Z= N_G[U] \setminus \{y\}$, where $y$ is a vertex of $N_G[U]\setminus (V(H)\cup U)$
with only one neighbor in $U$.
Note that $|Z|\le 8$ and at most $5$ vertices of $Z$ are $3$-free in $G$, and thus
$m(G-Z) \geq m(G)-5$.  Furthermore, observe that $G[Z]$ contains an independent set $U'$ of size 3 such that $N_G[U']\subseteq Z$.
Hence,
$\alpha(G) \geq \alpha(G') + 3 \geq \frac{n-8}{3}+  \frac{1}{42}(m(G)-5) + 3 >  \frac{n}{3}+  \frac{1}{42}m(G)$.

\item The second subcase is that $|N_G[U]| = 10$, and thus the vertices of $U$ have pairwise distinct neighbors not
contained in $V(H)\cup\{ x\}$.  For $1\le i\le 3$, let $y_i$ denote the neighbor of $u_i$ distinct from $v_i$ and $x$, see Figure~\ref{fig-cases}(c).
Let $G'$ be the graph obtained from $G$ by removing the $8$ vertices of $Y=V(H)\cup U \cup\{x,y_3\}$
and adding the edge $y_1y_2$ if it is not already present (since $G$ is diamond-free, this does not create $K_4$).
All vertices of $G'$ that are 3-free in $G$ are also $3$-free in $G'$, unless they belong to
a triangle, a necklace, or a triangle-dominated $6$-cycle containing the edge $y_1y_2$.  Note that at most one necklace or triangle-dominated $6$-cycle $Q$ of $G'$
contains the edge $y_1y_2$ (only the 6-vertex diamond necklace can intersect another necklace, and this
situation cannot arise since $G$ is diamond-free).  Furthermore, all but at most $9$ vertices of $V(Q)$
are contained in a triangle in $G$, and at most $5$ vertices of $Y$ are $3$-free in $G$.
Consequently $m(G') \geq m(G) - 14$.
Consider any independent set $S$ of $G'$.  Since $y_1y_2\in E(G')$, we can by symmetry
assume that $y_2\not\in S$.  Hence, $S\cup\{v_1,u_2,u_3\}$ is an independent set in $G$.
By the induction hypothesis, this gives
$\alpha(G) \geq \alpha(G') + 3 \geq \frac{n-8}{3}+  \frac{1}{42}m(G') + 3
\geq   \frac{n+1}{3}+  \frac{1}{42}(m(G)-14) = \frac{n}{3}+  \frac{1}{42}m(G)$.
\end{itemize}

Secondly, let us consider the case that there is no common neighbor of all vertices of $U$.
For $1 \leq i < j \leq 3$, let $x_{ij}$ denote a common neighbor of $u_i$ and $u_j$, and observe
that $x_{12}$, $x_{13}$, and $x_{23}$ are three distinct vertices.  Since $G$ does not contain a Havel
necklace as an induced subgraph, the set $\{x_{12},x_{13},x_{23}\}$ is independent in $G$.
Let $G'$ be obtained from $G-N_G[V(H)]$ by identifying $x_{12}$, $x_{13}$, and $x_{23}$ into a new vertex $x$.
Since $G$ does not contain a triangle-dominated $6$-cycle, $G'$ is $K_4$-free.
All vertices of $G'$ that are 3-free in $G$ are also $3$-free in $G'$, unless they belong to
a triangle, a necklace, or a triangle-dominated $6$-cycle containing $x$.  Since $G$ is diamond-free, $x$ is contained in at most one
such subgraph of $G'$, and at most $8$ vertices of this subgraph other than $x$ are $3$-free in $G$.
We have $|V(G)|-|V(G')|=8$, but the vertices of $H$ are not $3$-free in $G$.
Consequently, $m(G') \geq m(G)-14$.
Let $S$ be an independent set in $G'$.
If $x \in S$ then $(S\setminus x) \cup \{x_{12},x_{13},x_{23},v_1\}$ is an independent set of $G$, see Figure~\ref{fig-cases}(d).
If $x \not\in S$ then $S \cup \{u_1,u_2,u_3\}$ is an independent set of $G$.
This gives
$\alpha(G) \geq \alpha(G') + 3 \geq \frac{n-8}{3}+  \frac{1}{42}m(G') + 3
\geq   \frac{n+1}{3}+  \frac{1}{42}(m(G)-14) = \frac{n}{3}+  \frac{1}{42}m(G)$.

\end{proof}

\section{Proofs of the main results}\label{sec-mainproofs}

We need the following corollary of the list-coloring version of Brook's theorem~\cite{borodin1977criterion,erdosrubintaylor1979}.
\begin{lemma}[\cite{borodin1977criterion,erdosrubintaylor1979}]\label{lemma-list}
Let $L$ be a list assignment for a graph $G$ such that $|L(v)|\ge \deg(v)$ for every $v\in V(G)$.
If $G$ is not $L$-colorable, then $G$ contains a clique or an odd cycle $K$ such that all but at most one vertex of $K$
have lists of size $\delta(K)=\Delta(K)$.
\end{lemma}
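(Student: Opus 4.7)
The plan is induction on $|V(G)|$. Consider a minimum counterexample $(G, L)$: $G$ is not $L$-colorable, contains no clique or odd cycle $K$ satisfying the conclusion, and every graph with fewer vertices does satisfy it. I would first reduce to the case $|L(v)| = \deg(v)$ for every $v$: if some $v$ has $|L(v)| > \deg(v)$, then the restriction of $L$ to $G-v$ still satisfies the degree hypothesis, by minimality $G-v$ is $L$-colorable (any forbidden $K \subseteq G-v$ would also lie in $G$), and then only $\deg(v) < |L(v)|$ colors are forbidden at $v$, so we extend the coloring, contradiction. A similar component-wise and block-wise argument reduces to $G$ being $2$-connected: at a cut vertex $v$ split along the blocks at $v$, apply induction to each block with a one-vertex shrinkage of the list at $v$, and glue the colorings together; any forbidden $K$ produced lives in a single block and hence in $G$.

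In the remaining case, $G$ is $2$-connected with $|L(v)| = \deg(v)$ everywhere. The target is to show $G$ is itself a clique or an odd cycle; then $K = G$ works, since every list has size $\delta(G) = \Delta(G)$. If not, a classical lemma (standard in proofs of Brooks' theorem) supplies non-adjacent vertices $u, v$ with a common neighbor $w$ such that $G - \{u,v\}$ remains connected. I would then pick colors $c_u \in L(u)$ and $c_v \in L(v)$, taking $c_u = c_v$ whenever $L(u) \cap L(v) \neq \emptyset$, delete them from the lists at $N(u) \cup N(v)$, and invoke induction on $G - \{u,v\}$ with the modified list $L'$. At each $x \in V(G) \setminus \{u,v\}$, both $|L'(x)|$ and $\deg_{G-\{u,v\}}(x)$ drop by the number of neighbors of $x$ in $\{u,v\}$ (counted with multiplicity when $c_u \ne c_v$), so $|L'(x)| \geq \deg_{G-\{u,v\}}(x)$ continues to hold; induction then produces an $L'$-coloring of $G - \{u,v\}$, which extends to the desired $L$-coloring of $G$.

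The main obstacle is the $2$-connected step, specifically when $L(u) \cap L(v) = \emptyset$ and $u, v$ share a common neighbor other than $w$ whose list would drop by two. Managing this requires choosing the triple $u, v, w$ carefully---using the non-completeness of $G$ to find $u, v$ at distance exactly two with $G - \{u,v\}$ connected, and exploiting structural constraints on second neighborhoods---and in the few configurations where no suitable choice exists one argues that $G$ must already be a clique or an odd cycle, matching the target $K = G$. This is the delicate Gallai-tree style case analysis at the core of every list-Brooks proof.
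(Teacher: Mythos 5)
The paper imports this lemma from Borodin~\cite{borodin1977criterion} and Erd\H{o}s--Rubin--Taylor~\cite{erdosrubintaylor1979} without proof, so there is no in-paper argument to compare against; I assess your proposal on its own terms. Your first reduction is sound: if some $|L(v)|>\deg(v)$ you delete $v$ and apply the induction hypothesis to $G-v$ with the \emph{unchanged} list $L$; since the lists are not modified, any clique or odd cycle that is tight for $L$ in $G-v$ is also tight for $L$ in $G$, and if instead $G-v$ is $L$-colorable you extend greedily to $v$.

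The remaining steps have a genuine gap. In both the cut-vertex gluing and the $2$-connected Brooks step you recurse on a smaller graph with a \emph{modified} list $L'$ and then write as if induction "produces an $L'$-coloring." But the induction hypothesis is a dichotomy, and it may instead return a clique or odd cycle $K'$ that is tight for $L'$. Because $L'$ has strictly smaller lists on $N(u)\cup N(v)$ (respectively, at the shrunk cut vertex), a $K'$ meeting those vertices can have two or more members whose $L$-lists strictly exceed $\delta(K')$, so $K'$ is \emph{not} a witness for $(G,L)$; your claim that "any forbidden $K$ produced\ldots\ lives in $G$" thus fails for the original lists. You also flag the case $L(u)\cap L(v)=\emptyset$ as needing a "delicate Gallai-tree style case analysis" and leave it there --- that case is precisely the substance of the Borodin/ERT argument and cannot be elided. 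A route that avoids both issues: after your first reduction, a connected non-$L$-colorable $G$ has $|L(v)|=\deg_G(v)$ for all $v$; the Borodin/Erd\H{o}s--Rubin--Taylor theorem then says every block of $G$ is a complete graph or an odd cycle; take a leaf block $K$ of the block-cut tree, which contains at most one cut vertex $c$, and every $v\in V(K)\setminus\{c\}$ satisfies $\deg_K(v)=\deg_G(v)=|L(v)|=\delta(K)$, so $K$ has at most one exceptional vertex as required. The core of the lemma is thus the Gallai-tree characterization, which your proposal ultimately defers rather than proves.
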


A set $Z\subseteq V(G)$ is \emph{$\Delta$-free} if either $\Delta>3$ and all vertices of $Z$ are $\Delta$-free, or
$\Delta=3$ and $Z$ can be partitioned so that each part either induces a free diamond or contains only $3$-free vertices.
We say that $G$ is \emph{$K_\Delta$-partitioned} 
if there exists a partition of the vertices of $G$ such that
each part induces a clique of size $\Delta$.  We say that a subset $Z$ of vertices of $G$ is \emph{$\Delta$-profitably nibbled}
if there exists a partition $Z_1,\ldots, Z_r$ of $Z$ such that for $a=1,\ldots,r$, the set $Z_a$ is $\Delta$-profitable in
$G-\bigcup_{i=1}^{a-1} Z_i$ and $|Z_a|\le \max(\Delta+3,10)$.

We are now ready to prove our main decomposition result.

\begin{lemma}\label{lemma-decomp}
There exists an algorithm with time complexity $O(\Delta^2 n)$ that, given as an input 
an integer $\Delta\ge 3$ and an $n$-vertex graph $G$ with $\max(\Delta(G),\omega(G))\le\Delta$, returns a partition of vertices of $G$
to sets $A$, $B$, $C$, and $D$, such that
\begin{itemize}
\item $G[A]$ is $\Delta$-tightly partitioned,
\item $G[B]$ is $K_\Delta$-partitioned and $|B|\le 3\Delta(|C|+|D|)$,
\item $C$ is $\Delta$-profitably nibbled,
\item $D$ is $\Delta$-free in $G-C$, and
\item $\alpha(G)=\alpha(G[B\cup C\cup D])+|A|/\Delta$.
\end{itemize}
\end{lemma}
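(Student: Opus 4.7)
The plan is to design a greedy algorithm that maintains a residual graph $G^\ast$, initially equal to $G$, together with growing disjoint sets $A$, $B$, $C$, $D$. Starting with $A=B=C=D=\emptyset$, I repeatedly pick a pivot vertex $v\in V(G^\ast)$, scan its $2$-neighborhood, and apply the first matching local operation: (1) if $v$ lies in a non-clique $\Delta$-tight induced subgraph $H$ of $G^\ast$, move $V(H)$ into $A$ and delete $V(H)$ from $G^\ast$; (2) if $v$ lies in a $\Delta$-profitable set $Z$ in $G^\ast$ with $|Z|\le\max(\Delta+3,10)$, move $Z$ into $C$ and delete it from $G^\ast$. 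Once no pivot admits (1) or (2), I declare $B$ to be the union of the $V(K_\Delta)$'s in $G^\ast$ and $D$ the remaining vertices. The removal order produced by (2) gives a partition of $C$ witnessing $\Delta$-profitably nibbled; termination of (1) guarantees that every $\Delta$-tight induced subgraph of $G^\ast$ is a $K_\Delta$, so $G[B]$ is $K_\Delta$-partitioned and each vertex of $D$ is $\Delta$-free in $G-C$ (the $\Delta=3$ clause about free diamonds is handled by noting, as in Lemma~\ref{lemma-free3}, that no diamond of $G-C$ can belong to a necklace once (1) has been exhausted).

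The identity $\alpha(G)=\alpha(G[B\cup C\cup D])+|A|/\Delta$ is proved by induction on the number of $A$-extractions. The direction $\le$ uses $\alpha(G[A])\le|A|/\Delta$, which follows because each block of $A$ is $\Delta$-tight. For the reverse direction I invoke the observation from the proofs of Lemmas~\ref{lemma-free} and \ref{lemma-free3}: whenever a non-clique $\Delta$-tight subgraph $H$ is extracted, we can select an independent set $S_H\subseteq V(H)$ of size $|V(H)|/\Delta$ whose closed neighborhood in the residual at the moment of extraction is contained in $V(H)$. Because later extractions take disjoint vertices, $\bigcup_H S_H$ is an independent set of $G$ with no neighbor in $B\cup C\cup D$, and combining it with any maximum independent set of $G[B\cup C\cup D]$ produces an independent set of $G$ of the required size.

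The main obstacle is the bound $|B|\le 3\Delta(|C|+|D|)$, which I prove via a charging argument. For each $K_\Delta$ clique $K\subseteq B$, termination of (1) and (2) forces every vertex of $K$ to have $G^\ast$-degree exactly $\Delta$ (else Lemma~\ref{lemma-reducl} gives a profitable reduction), so $K$ emits exactly $\Delta$ edges outside; the external endpoints form an independent set $S_K$ with $|S_K|\ge 2$ (since $\omega(G)\le\Delta$), and no two vertices of $S_K$ are $\Delta$-adjacent in $G^\ast$ (otherwise Lemmas~\ref{lemma-noadj}, \ref{lemma-free}, or \ref{lemma-free3} extend $K$ to a non-clique $\Delta$-tight subgraph already caught by (1), or produce a profitable set already caught by (2)). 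Hence $K$ has a neighbor $y_K\in C\cup D$, and I charge $K$ to $y_K$. A local analysis in the spirit of Lemma~\ref{lemma-noadj}, using $\deg_G(y_K)\le\Delta$ and the structural constraints placed on $y_K$ by its witnessing role, bounds the multiplicity of any $y\in C\cup D$ by $3$, yielding $|B|/\Delta\le 3(|C|+|D|)$.

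The time bound $O(\Delta^2n)$ follows because each pivot step inspects a $2$-neighborhood of size $O(\Delta^2)$ and because (1) and (2) each involve a finite catalog of obstruction patterns, derived from Lemmas~\ref{lemma-reducl}--\ref{lemma-free3}, that can be tested in $O(\Delta^2)$ time; for $\Delta=3$ the unbounded Havel and diamond necklaces are still detectable by chasing a bounded ``link'' structure along the necklace in amortized constant time per vertex. The most delicate step of the whole proof is the charging argument, where the constant $3$ is dictated by the need to handle clusters of $K_\Delta$'s that may share an external witness, and where the case analysis splits between $\Delta\ge 4$ (driven by Lemma~\ref{lemma-noadj}) and $\Delta=3$ (driven by Lemma~\ref{lemma-free3}).
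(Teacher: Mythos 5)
Your construction sends every vertex of a $K_\Delta$ in the final residual $G^\ast$ into $B$ and the rest into $D$, while $A$ absorbs only \emph{non-clique} $\Delta$-tight subgraphs found along the way. That cannot deliver $|B|\le 3\Delta(|C|+|D|)$: if $G$ is a disjoint union of $K_\Delta$'s, or a long chain of $K_\Delta$'s joined by matching edges, there are no $\Delta$-profitable sets and no non-clique $\Delta$-tight subgraphs, so $A=C=D=\emptyset$ while every vertex lands in $B$. The paper avoids this by keeping essentially \emph{all} surviving $K_\Delta$'s inside $A$ (as the block $A_3'$ of its $K_\Delta$-partition) and moving a clique $K$ to $B$ only when at least $\Delta-1$ of its vertices already have neighbours in $B\cup C\cup D$; $|B|$ is then bounded by a potential-function argument (each move drops the number of edges between $A_3'$ and $B\cup C\cup D$ by at least $\Delta-2$, starting from at most $\Delta(|C|+|D|)$). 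Your charging argument also rests on two false premises: termination of (1) and (2) does \emph{not} force every vertex of a residual $K_\Delta$ to have degree $\Delta$, because Lemma~\ref{lemma-reducl} yields $\alpha$-reductions, not $\Delta$-profitable sets --- a clique $K$ of size $\Delta$ contains no independent set of size $\ge 2$, so $V(K)$ is never $\Delta$-profitable and operation (2) does not fire on a low-degree clique vertex; and a surviving $K_\Delta$ need not have any neighbour in $C\cup D$ at all, since its external neighbours can all lie in other cliques, hence in $B$ itself.

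Because the paper's $A$ must contain cliques, its proof of $\alpha(G)\ge\alpha(G[B\cup C\cup D])+|A|/\Delta$ cannot rely on local independent sets $S_H$ with $N_G[S_H]\subseteq V(H)$: a $K_\Delta$ has no such set. Instead the paper assigns list $\{2,\ldots,\Delta\}$ to vertices of $A$ with a neighbour in $B\cup C\cup D$ and $\{1,\ldots,\Delta\}$ otherwise, invokes the list-colouring form of Brooks' theorem (Lemma~\ref{lemma-list}), shows the Gallai-type obstructions are excluded by how $A$ and $B$ were built, and takes the colour-$1$ class. You will need some such global argument once the $K_\Delta$'s sit in $A$. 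A smaller issue: the fact you borrow from Lemmas~\ref{lemma-free} and \ref{lemma-free3} is that $N_G[S_H]\subseteq V(H)$ holds in the \emph{original} graph, not merely in the residual at the moment $H$ is removed; the weaker residual statement you invoke does not rule out neighbours of $S_H$ inside $C_i$'s extracted earlier, so even for the non-clique blocks your argument as written leaves a gap.
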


\begin{figure}
\begin{center}
\tikzset{insep/.style={inner sep=1.7pt, outer sep=0pt, circle, fill},
 noin/.style={inner sep=0pt, outer sep=0pt, circle, fill},}
\begin{tikzpicture}[scale=1.2] 
\def\e{0.3}
\begin{scope}[xshift=0cm,yshift=0cm]
\draw (0,0) node[insep](y){};
\draw (1.5,0) node[insep](b){};
\begin{scope}[xshift=0.75cm,yshift=-0.75cm]
\draw[rotate=0]
(0,\e) node[insep](x1){} -- (0,-\e) node[insep](x2){}
(\e,0) node[insep](z1){} (-\e,0) node[insep](z2){} 
(x1)--(z1)--(x2)--(z2)--(x1);
\end{scope}
\draw(z2)--(y) -- (b) --(z1);
\begin{scope}[xshift=0.75cm,yshift=0.75cm]
\draw[rotate=0]
(0,\e) node[insep](x1){} -- (0,-\e) node[insep](x2){}
(\e,0) node[insep](z1){} (-\e,0) node[insep](z2){} 
(x1)--(z1)--(x2)--(z2)--(x1);
\end{scope}
\draw(z2)--(y) -- (b) --(z1);
\end{scope}
\begin{scope}[xshift=4cm,yshift=0cm]
\draw (0,0) node[insep](y){};
\draw (1.5,0) node[insep](b){};
\begin{scope}[xshift=0.75cm,yshift=-0.75cm]
\draw[rotate=0]
(0,\e) node[insep](x1){} -- (0,-\e) node[insep](x2){}
(\e,0) node[insep](z1){} (-\e,0) node[insep](z2){} 
(x1)--(z1)--(x2)--(z2)--(x1);
\end{scope}
\draw(z2)--(y) -- (b) --(z1);
\draw[rotate=0]
(0.75,0.75) node[insep](x){} (y) -- (x) -- (b);
;
\end{scope}
\end{tikzpicture}
\end{center}
\caption{Set $Y$ for $\Delta=3$ in Lemma~\ref{lemma-decomp}.}\label{fig-profit3}
\end{figure}
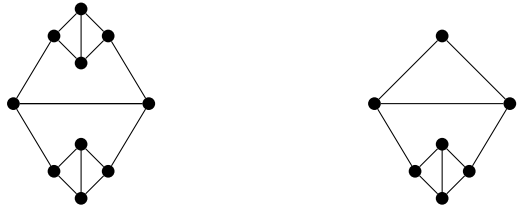

\begin{figure}
\begin{center}
\begin{tikzpicture} 
\draw[rounded corners=20pt] 
(-5,-2) rectangle (5,2)
;
\draw
\foreach \x/\l in {1/A_0,2/A_1,3/A_3,4/B,5/C,6/D}{(-5.8+1.666*\x,0) node(p\x){} ++ (0,-2.5) node{$\l$}};
\foreach \x in {1,2,3,4,5}{
\draw (-5+1.666*\x,-2) -- ++(0,4);
}

\draw[every node/.style={draw,circle,fill,inner sep=0.7pt}]

(p2) 
\foreach \x in {0,1,2,3}{+(45+90*\x:0.25)node(x\x){}}
+(0,0.5) node(e1){} +(0,-0.5) node(e2){} +(0.5,0) node(z1){} +(-0.5,0) node(z2){}
(x0)--(x1)--(x2)--(x3)--(x1)--(z2)--(x2)--(x0)--(x3)--(z1)--(e1)--(z2)--(e2)--(z1)--(x0) (e1)--(e2)

(p1) ++(0,0.8) 
\foreach \x in {0,1,...,7}{+(45*\x:0.5)node(x\x){}} 
(x0)--(x1)--(x2)--(x3)--(x4)--(x5)--(x6)--(x7)--(x0) (x0)--(x2)--(x4)--(x6)--(x0) (x1)--(x3)--(x5)--(x7)--(x1)

(p1) ++(0,-0.8) 
coordinate (XX)
\foreach \x in {0,1,2,3}{{(XX)++(0,0.35)++(45+90*\x:0.25)node(x\x){}}}
(x0)--(x1)--(x2)--(x3)--(x0)--(x2)(x1)--(x3)
(XX) +(0.5,0.35) node(z1){} +(-0.5,0.35) node(z2){}
(x0)--(z1)--(x3)(x2)--(z2)--(x1)
\foreach \x in {0,1,2,3}{{(XX)++(0,-0.35)++(45+90*\x:0.25)node(x\x){}}}
(x0)--(x1)--(x2)--(x3)--(x0)--(x2)(x1)--(x3)
(XX) +(0.5,-0.35) node(y1){} +(-0.5,-0.35) node(y2){}
(x0)--(y1)--(x3)(x2)--(y2)--(x1)
(y1)--(z1) (y2)--(z2)
(y1) to[out=100,in=280] (z2)
(y2) to[out=80,in=260] (z1)

(p3) 
\foreach \x in {0,...,3}{+(45+90*\x:0.4) node(x\x){}--+(45+90*\x:0.7) }  (x0)--(x1)--(x2)--(x3)--(x0)--(x2) (x3)--(x1)

(p4) 
\foreach \x in {0,...,3}{+(90*\x:0.4) node(x\x){}}  (x0)--(x1)--(x2)--(x3)--(x0)--(x2) (x3)--(x1) 
(x0)--+(0.3,0) (x1)--+(0.3,0) (x3)--+(0.3,0) (x2)--+(-0.3,0)

(p5) 
\foreach \x in {0,...,4}{+(90+72*\x:0.5) node(x\x){}}(x0)--(x1)--(x2)--(x4)--(x0)--(x2) (x3)--(x4)--(x1)--(x3)--(x0)
(x2)--++(0,-0.4)node{}  (x3)--++(0,-0.4)node{}

(p6)
node{}
;
\end{tikzpicture}
\end{center}
\caption{Partition of $G$ in Lemma~\ref{lemma-decomp} for $\Delta=4$.}\label{fig-partition}
\end{figure}
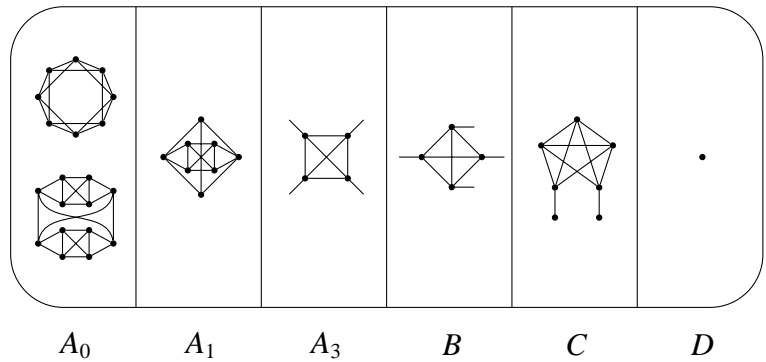

\begin{proof}
See Figures~\ref{fig-partition} and \ref{fig-partition3} for an illustration of the sets that we construct.
Each of the following steps can be easily done in time $O(\Delta^2 n)$:
\begin{itemize}
\item Initialize $C=\emptyset$, and while there exists a set $Y\subseteq V(G)\setminus C$ such that
$G[Y]$ is isomorphic to 
\begin{itemize}
\item if $\Delta \geq 4$, the clique on $\Delta+1$ vertices minus one edge, or
\item if $\Delta = 3$,  two diamond necklaces that share an edge or a diamond necklace sharing
and edge with a triangle (see Figure~\ref{fig-profit3}),
\end{itemize}
add $N_{G-C}[Y]$ to $C$ (note that $|N_{G-C}[Y]|\le \max(\Delta+3,10)$ and $N_{G-C}[Y]$ is $\Delta$-profitable in $C$).
\item If $\Delta\le 5$, find all connected components of $G-C$ of size at most $4\Delta$ that are $\Delta$-tight, and let $A_0$ be the union
of their sets of vertices; otherwise, let $A_0=\emptyset$.
\item If $\Delta=4$, find all subgraphs of $G-(C\cup A_0)$ isomorphic to an extended clique (all these subgraphs are necessarily vertex-disjoint,
and vertex-disjoint from all cliques of size $4$ not fully contained in the subgraph),
and let $A_1$ be the union of their sets of vertices; otherwise, let $A_1=\emptyset$.
\item If $\Delta=3$, find all subgraphs of $G-(C\cup A_0)$ isomorphic to a necklace (all these subgraphs are necessarily vertex-disjoint
and vertex-disjoint from all triangles not fully contained in the subgraph, by the choice of $C$), and let $A_2$ be the union of their sets of vertices; otherwise, let
$A_2=\emptyset$.
\item Let $D$ be the set of vertices of $G-(A_0\cup A_1\cup A_2\cup C)$ that 
\begin{itemize}
\item if $\Delta \ge 4$ are not contained in cliques of size $\Delta$, and
\item if $\Delta = 3$ are not contained in cliques of size $\Delta$ or are contained in diamonds.
\end{itemize}
Let $A'_3=V(G)\setminus (A_0\cup A_1\cup A_2\cup C\cup D)$.
\end{itemize}
Note that since $\Delta(G)\le \Delta$, the choice of $C$ and $D$ ensures that any two cliques of size $\Delta$ in $G[A'_3]$ are vertex-disjoint,
and thus $A'_3$ is $K_\Delta$-partitioned.  Next, perform the following procedure: initialize $B=\emptyset$,
and as long as there exists a clique $K$ of size $\Delta$ in $G[A'_3]$ such that at least $\Delta-1$ vertices of $K$
have neighbors in $B\cup C\cup D$, remove $V(K)$ from $A'_3$ and add it to $B$; let $A_3$ denote the set obtained from
$A'_3$ by performing this procedure.  
Note that each step of the procedure decreases the number of edges between
$A'_3$ and $B\cup C\cup D$ by at least $\Delta-2$, and the number of edges going out of $|C\cup D|$ is at most $\Delta(|C|+|D|)$,
and thus $|B|\le \frac{\Delta^2}{\Delta-2}(|C|+|D|)\le 3\Delta(|C|+|D|)$.

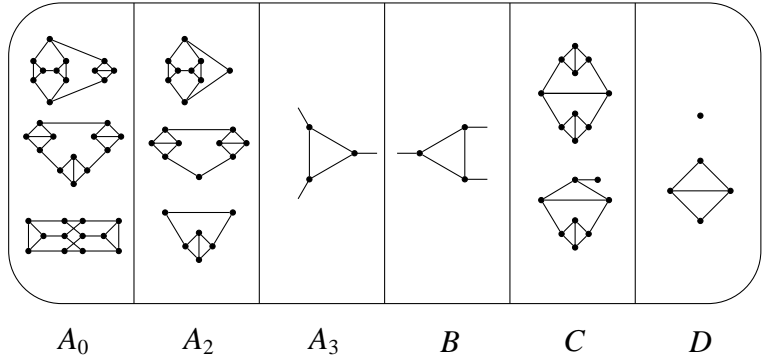
\begin{figure}
\begin{center}
\begin{tikzpicture}
\draw[rounded corners=20pt] 
(-5,-2) rectangle (5,2)
;
\draw
\foreach \x/\l in {1/A_0,2/A_2,3/A_3,4/B,5/C,6/D}{(-5.8+1.666*\x,0) node(p\x){} ++ (0,-2.5) node{$\l$}};
\foreach \x in {1,2,3,4,5}{
\draw (-5+1.666*\x,-2) -- ++(0,4);
}

\draw
(p1)++(0,0) 
node{\begin{tikzpicture}[scale=0.6,every node/.style={draw,circle,fill,inner sep=0.7pt}]
\def\e{0.3}
\draw[rotate=90]
(0,\e) node(x1){} -- (0,-\e) node(x2){}
(\e,0) node(y1){} (-\e,0) node(y2){} 
(x1)--(y1)--(x2)--(y2)--(x1);
\begin{scope}[xshift=1.5cm,yshift=0cm]
\draw[rotate=90]
(0,\e) node(a1){} -- (0,-\e) node(a2){}
(\e,0) node(b1){} (-\e,0) node(b2){} 
(a1)--(b1)--(a2)--(b2)--(a1);
\end{scope}
\begin{scope}[xshift=0.75cm,yshift=-0.75cm]
\draw[rotate=0]
(0,\e) node(x1){} -- (0,-\e) node(x2){}
(\e,0) node(z1){} (-\e,0) node(z2){} 
(x1)--(z1)--(x2)--(z2)--(x1);
\end{scope}
\draw(z2)--(y2)  (y1) -- (b1) (b2)--(z1);
\end{tikzpicture}}
;

\draw (p1)++(0,1.1)
node{\begin{tikzpicture}[scale=0.6,every node/.style={draw,circle,fill,inner sep=0.7pt}]
\draw (0.15,0) node[](a){}
--+(45:0.3) node[](b){}
--+(-45:0.3) node[](c){}--(a)
--++(-0.3,0) node[](x){}
--+(135:0.3) node[](y){}
--+(225:0.3) node[](z){}--(x)
(0,0.7) node[](k){}
(0,-0.7) node[](l){}
(b)--(k)--(y)
(c)--(l)--(z)
(1,0) node[](q1){}
--++(45:0.3) node[](q2){}
--++(-45:0.3) node[](q3){}
--++(45:-0.3) node[](q4){}--(q1)--(q3)
(k)--(q2) (l)--(q4)
;
\end{tikzpicture}
};

\draw (p1) ++(0,-1.1)
node{
\begin{tikzpicture}[scale=0.2,every node/.style={draw,circle,fill,inner sep=0.7pt}]
\draw
(-2,0) node[](z2){}
(-3,-1) node[](z1){}
(-3,1) node[](z3){}
(z1)--(z2)--(z3)--(z1)
(2,0) node[](w2){}
(3,-1) node[](w1){}
(3,1) node[](w3){}
(w1)--(w2)--(w3)--(w1)
\foreach \x in {1,2,3}{
(-0.6,-2+\x) node[](x\x){}--(z\x)
(0.6,-2+\x) node[](y\x){}--(w\x)
}
(x1) -- (y1)--(x2)--(y3)--(x3)--(y2)--(x1)
;
\end{tikzpicture}
};

\draw (p2)++(0,1.1)
node{\begin{tikzpicture}[scale=0.6,every node/.style={draw,circle,fill,inner sep=0.7pt}]
\draw (0.15,0) node(a){}
--+(45:0.3) node[](b){}
--+(-45:0.3) node[](c){}--(a)
--++(-0.3,0) node[](x){}
--+(135:0.3) node[](y){}
--+(225:0.3) node[](z){}--(x)
(0,0.7) node[](k){}
(0,-0.7) node[](l){}
(b)--(k)--(y)
(c)--(l)--(z)
(1,0) node[](q){}
(k)--(q)--(l)
;
\end{tikzpicture}
};

\draw
(p2)++(0,0)
node{\begin{tikzpicture}[scale=0.6,every node/.style={draw,circle,fill,inner sep=0.7pt}]
\def\e{0.3}
\begin{scope}[xshift=3cm,yshift=0cm]
\draw[rotate=90]
(0,\e) node(x1){} -- (0,-\e) node(x2){}
(\e,0) node(y1){} (-\e,0) node(y2){} 
(x1)--(y1)--(x2)--(y2)--(x1);
\begin{scope}[xshift=1.5cm,yshift=0cm]
\draw[rotate=90]
(0,\e) node(a1){} -- (0,-\e) node(a2){}
(\e,0) node(b1){} (-\e,0) node(b2){} 
(a1)--(b1)--(a2)--(b2)--(a1);
\end{scope}
\begin{scope}[xshift=0.75cm,yshift=-0.75cm]
\draw[rotate=0]
(0,0) node(z2){};
\end{scope}
\draw(z2)--(y2)  (y1) -- (b1) (b2)--(z2);
\end{scope}
\end{tikzpicture}}

(p2)++(0,-1.1)
node{\begin{tikzpicture}[scale=0.6,every node/.style={draw,circle,fill,inner sep=0.7pt}]
\def\e{0.3}
\begin{scope}[xshift=6cm,yshift=0cm]
\draw (0,0) node(y){};
\begin{scope}[xshift=1.5cm,yshift=0cm]
\draw (0,0) node(b){};
\end{scope}
\begin{scope}[xshift=0.75cm,yshift=-0.75cm]
\draw[rotate=0]
(0,\e) node(x1){} -- (0,-\e) node(x2){}
(\e,0) node(z1){} (-\e,0) node(z2){} 
(x1)--(z1)--(x2)--(z2)--(x1);
\end{scope}
\draw(z2)--(y) -- (b) --(z1);
\end{scope}
\end{tikzpicture}}
;

\draw[every node/.style={draw,circle,fill,inner sep=0.7pt}]

(p3) 
\foreach \x in {0,...,2}{+(0+120*\x:0.4) node(x\x){}--+(0+120*\x:0.7) }  (x0)--(x1)--(x2)--(x0)

(p4) 
\foreach \x in {0,...,2}{+(180+120*\x:0.4) node(x\x){}}  (x0)--(x1)--(x2)--(x0)
(x0)--++(180:0.3) (x2)--++(0:0.3) (x1)--++(0:0.3)

(p6)++(0,0.5) node{} 

(p6)++(0,-0.5)
\foreach \x in {0,...,3}{+(90*\x:0.4) node(x\x){} }  (x0)--(x1)--(x2)--(x3)--(x0)--(x2)
;

\draw
(p5) ++(0,0.8)
node{\begin{tikzpicture}[scale=0.6,every node/.style={draw,circle,fill,inner sep=0.7pt}]
\def\e{0.3}
\draw (0,0) node(y){};
\draw (1.5,0) node(b){};
\begin{scope}[xshift=0.75cm,yshift=-0.75cm]
\draw[rotate=0]
(0,\e) node(x1){} -- (0,-\e) node(x2){}
(\e,0) node(z1){} (-\e,0) node(z2){} 
(x1)--(z1)--(x2)--(z2)--(x1);
\end{scope}
\draw(z2)--(y) -- (b) --(z1);
\begin{scope}[xshift=0.75cm,yshift=0.75cm]
\draw[rotate=0]
(0,\e) node(x1){} -- (0,-\e) node(x2){}
(\e,0) node(z1){} (-\e,0) node(z2){} 
(x1)--(z1)--(x2)--(z2)--(x1);
\end{scope}
\draw(z2)--(y) -- (b) --(z1);
\end{tikzpicture}}

(p5) ++(0,-0.8)
node{\begin{tikzpicture}[scale=0.6,every node/.style={draw,circle,fill,inner sep=0.7pt}]
\def\e{0.3}
\draw (0,0) node(y){};
\draw (1.5,0) node(b){};
\begin{scope}[xshift=0.75cm,yshift=-0.75cm]
\draw[rotate=0]
(0,\e) node(x1){} -- (0,-\e) node(x2){}
(\e,0) node(z1){} (-\e,0) node(z2){} 
(x1)--(z1)--(x2)--(z2)--(x1);
\end{scope}
\draw(z2)--(y) -- (b) --(z1);
\draw[rotate=0]
(0.75,0.45) node(x){} (y) -- (x) -- (b)
(x)--++(0.5,0)node{}
;
\end{tikzpicture}}
;

\end{tikzpicture}
\end{center}
\caption{Partition of $G$ in Lemma~\ref{lemma-decomp} for $\Delta=3$.}\label{fig-partition3}
\end{figure}

Let $A=A_0\cup A_1\cup A_2\cup A_3$.  Each $\Delta$-tight graph $H$ satisfies $\alpha(H)=|V(H)|/\Delta$, and
since $A$ is $\Delta$-tightly partitioned, clearly $\alpha(G[A])\le |A|/\Delta$.  It follows that
$\alpha(G)\le \alpha(G[B\cup C\cup D])+|A|/\Delta$.

Conversely, consider any independent set $S$ in $G[B\cup C\cup D]$.  Let $L$ be the assignment of lists to vertices of $A$
defined by $L(v)=\{1,\ldots, \Delta\}$ if $v$ has no neighbor in $B\cup C\cup D$ and $L(v)=\{2,\ldots,\Delta\}$ otherwise.
Note that $G[A]$ contains no clique $K$ of size $\Delta$ such that all but at most one vertex of $K$ have lists of size
$\Delta-1$, as otherwise we would have moved $K$ to $B$.  Furthermore, if $\Delta=3$, $G[A]$ contains no odd cycle of length
at least $5$ with all but one vertices having list of size $2$, as otherwise $G[A]$ would not be $3$-tightly partitioned.
Hence, Lemma~\ref{lemma-list} implies that $G[A]$ is $L$-colorable.
Each $\Delta$-tight subgraph $H$ of $G[A]$ satisfies $\alpha(H)\le |V(H)|/\Delta$, and since we are using exactly $\Delta$
colors, $H$ contains $|V(H)|/\Delta$ vertices of each color.  Since $A$ is $\Delta$-tightly partitioned,
the set $S'$ of vertices of color $1$ has size $|A|/\Delta$.  Hence, $S\cup S'$
is an independent set in $G$ of size $|S|+|A|/\Delta$.  Therefore, $\alpha(G)=\alpha(G[B\cup C\cup D])+|A|/\Delta$.
\end{proof}

Next, we show that if $G$ does not contain independent set much larger than the lower bound $n/\Delta$, then
the sets $C$ and $D$ (and thus also $B$) are small.

\begin{lemma}\label{lemma-boundsize}
Let $\Delta\ge 3$ be an integer and $k\ge 0$ a rational number, let $G$ be an $n$-vertex graph with $\max(\Delta(G),\omega(G))\le\Delta$,
and let $A, B, C, D$ be a partition of $V(G)$ as in Lemma~\ref{lemma-decomp}.  If $|C|+|D|\ge 34\Delta^2k$,
then $\alpha(G)\ge n/\Delta+k$ and we can in time $O(\Delta^2n)$ find an induced subgraph $G_0$ of $G$
with $n_0\le 34\Delta^2\lceil k\rceil$ vertices such that $\alpha(G_0)\ge n_0/\Delta+k$.
\end{lemma}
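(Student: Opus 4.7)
The plan is to prove both conclusions by combining two ``gain'' sources (excess of $\alpha$ over $n/\Delta$): the nibble pieces of $C$ and the $\Delta$-freeness of $D$, each contributing at a rate of at least $1/(34\Delta^2)$ per vertex. The full sets $C$ and $D$ together will yield gain $(|C|+|D|)/(34\Delta^2)\ge k$ for the independence lower bound, and a carefully chosen sub-fragment of size $\lceil 34\Delta^2 k\rceil$ will yield gain $\ge k$ for the kernel $G_0$.

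For the lower bound, Lemma~\ref{lemma-decomp} gives $\alpha(G)=\alpha(G[B\cup C\cup D])+|A|/\Delta$, so I need to bound $\alpha(G[B\cup C\cup D])-(|B|+|C|+|D|)/\Delta$ from below. Write $C=Z_1\cup\cdots\cup Z_r$ from the definition of $\Delta$-profitable nibbling; inductively appending the profitable independent sets $S_a\subseteq Z_a$ (of size at least $(|Z_a|+1)/\Delta$) to any independent set $T$ in $G-C$ builds an independent set in $G$ of size $\ge |T|+(|C|+r)/\Delta$. Taking $T$ optimal in $G[B\cup D]$, the $K_\Delta$-partition of $B$ contributes $|B|/\Delta$, and since $D$ is $\Delta$-free in $G-C$---and therefore in $G[B\cup D]$, as $\Delta$-freeness is inherited by induced subgraphs---Lemma~\ref{lemma-free} (for $\Delta\ge 4$) or Lemma~\ref{lemma-free3} (for $\Delta=3$, using that each free diamond contributes $1$ to $m_2$ per $4$ vertices so $m(G[B\cup D])\ge|D|/4$, giving the bound $|D|/168\ge|D|/(34\cdot 9)$) adds an extra gain of $|D|/(34\Delta^2)$. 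Since $|Z_a|\le\max(\Delta+3,10)\le 34\Delta$ for $\Delta\ge 3$, we have $r\ge|C|/(34\Delta)$, so the $C$-part contributes at least $|C|/(34\Delta^2)$ above $|C|/\Delta$. Summing yields $\alpha(G)\ge n/\Delta+(|C|+|D|)/(34\Delta^2)\ge n/\Delta+k$.

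For the construction of $G_0$, the same reasoning is applied to a smaller graph $G_0:=G[C'\cup D']$ where $C':=Z_1\cup\cdots\cup Z_j$ and $D'\subseteq D$ (with entire free diamonds included when $\Delta=3$); $\Delta$-profitability of $Z_a$ in $G-\bigcup_{i<a}Z_i$ is preserved under restriction to the subgraph, and $\Delta$-freeness of $D'$ is inherited, so the argument gives $\alpha(G_0)\ge n_0/\Delta+j/\Delta+|D'|/(34\Delta^2)$. I choose $j$ and $|D'|$ to make the gain $\ge k$: if $r\ge\lceil\Delta k\rceil$ set $j=\lceil\Delta k\rceil$ and $|D'|=0$, giving $n_0\le\lceil\Delta k\rceil\cdot 34\Delta\le 34\Delta^2\lceil k\rceil$; otherwise set $j=r$ and $|D'|:=\lceil 34\Delta^2 k\rceil-34\Delta r$, which is at most $|D|$ because the hypothesis combined with $|C|\le 34\Delta r$ gives $|D|\ge 34\Delta^2 k-34\Delta r$, and a short calculation yields $n_0\le\lceil 34\Delta^2 k\rceil\le 34\Delta^2\lceil k\rceil$. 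Selecting these sets is linear in $n$ once the partition from Lemma~\ref{lemma-decomp} is available, keeping the overall runtime at $O(\Delta^2 n)$. The main bookkeeping hurdle will be the $\Delta=3$ case, where $D'$ can only be enlarged in blocks of $4$ (entire free diamonds); the slack between the loose bound $|D'|/(34\Delta^2)=|D'|/306$ used above and the tighter $|D'|/168$ available from Lemma~\ref{lemma-free3} absorbs any integer rounding and keeps $n_0$ within the required $34\Delta^2\lceil k\rceil$.
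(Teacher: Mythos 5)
Your proof is correct and follows essentially the same route as the paper: the nibbling structure of $C$ gives excess $r/\Delta\ge|C|/(34\Delta^2)$ (via $|Z_a|\le\max(\Delta+3,10)\le 34\Delta$), and Lemmas~\ref{lemma-free}/\ref{lemma-free3} applied to the remaining vertices give excess $|D|/(34\Delta^2)$, with the kernel $G_0$ built from a prefix of the nibble pieces or from all of $C$ plus a slice of $D$. The only cosmetic difference is that you apply the free lemma to $G[B\cup D]$ after invoking the equality $\alpha(G)=\alpha(G[B\cup C\cup D])+|A|/\Delta$ from Lemma~\ref{lemma-decomp}, whereas the paper applies the free lemma directly to $G-C=G[A\cup B\cup D]$ and bypasses the equality; both are valid since $\Delta$-freeness (and freeness of diamonds) passes to induced subgraphs. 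Your $|D'|=\lceil 34\Delta^2k\rceil-34\Delta r$ differs slightly from the paper's $\lceil 34\Delta^2k\rceil-|C|$, but both satisfy $|D'|\le|D|$ and give the required gain; and you are in fact a bit more explicit than the paper about keeping $D'$ a union of whole free diamonds when $\Delta=3$, which is the right thing to worry about.
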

%
%

\begin{proof}
Let $C_1,\ldots,C_r$ be a partition of $C$ showing that $C$ is $\Delta$-profitably nibbled.
For $a=1,\ldots, r+1$ let $G_a=G-\bigcup_{i=1}^{a-1} C_i$.
For $1\le a\le r$, we have $|C_a|\le \Delta+7$ and $\Delta\cdot(\alpha(G_a)-\alpha(G_{a+1}))\ge |C_a|+1$.

Since $\Delta \geq 3$, we have $|C_a|\le \Delta+7 \leq 34 \Delta$.
Hence $|C| \leq r 34 \Delta$.
Moreover, $\Delta\cdot(\alpha(G)-\alpha(G-C))\ge |C|+r$, and thus by Lemmas~\ref{lemma-free} and \ref{lemma-free3},
\begin{align*}
\alpha(G) &\ge \frac{|C|+r}{\Delta} + \alpha(G-C) 
                \ge \frac{|C|+r}{\Delta} + \frac{n-|C|}{\Delta}+\frac{|D|}{34\Delta^2} \\
&= \frac{n}{\Delta}+\frac{r}{\Delta}+\frac{1}{34|D|\Delta^2}
= \frac{n}{\Delta}+\frac{r34\Delta}{34\Delta^2}+\frac{|D|}{34\Delta^2}\\
&\geq  \frac{n}{\Delta}+\frac{|C|+|D|}{34\Delta^2} \geq n/\Delta + k.
\end{align*}

The graph $G_0$ can be defined
as $G_0=G[C_1\cup \ldots\cup C_{\lceil\Delta k\rceil}]$ if $r\ge \Delta k$, and as $G_0=G[C\cup D_0]$ for a set $D_0\subseteq D$
of size $\lceil 34\Delta^2k\rceil-|C|$ otherwise.
This can be computed in $O(\Delta^2n)$ since the partition of $V(G)$ to $A,B,C,D$
can be computed in $O(\Delta^2n)$ by Lemma~\ref{lemma-decomp}.
\end{proof}

Finally, we are ready to prove the results stated in the introduction.

\begin{proof}[Proof of Theorem~\ref{thm-approx}.]
Compute the partition $A, B, C, D$ of $V(G)$ as in Lemma~\ref{lemma-decomp}.
By Lemma~\ref{lemma-boundsize}, $|C|+|D|<34\Delta^2k$,
and since $G[A\cup B]$ is $\Delta$-tightly partitioned, we can set $X=C\cup D$.
\end{proof}

\begin{proof}[Proof of Corollary~\ref{cor-approx}.]
Compute the partition $A, B, C, D$ of $V(G)$ as in Lemma~\ref{lemma-decomp},
let $$k=\frac{|C|+|D|}{34\Delta^2},$$
and return that $\alpha(G)-n/\Delta\ge k$.  That this is a true statement follows from Lemma~\ref{lemma-boundsize}.

On the other hand, since $G[A\cup B]$ is $\Delta$-tightly partitioned, $\alpha(G)\le \alpha(G[C\cup D])+(|A|+|B|)/\Delta\le
|C\cup D|+n/\Delta$, and thus $\alpha(G)-n/\Delta\le |C\cup D|=34\Delta^2k$.
Hence, the algorithm approximates $\alpha(G)-n/\Delta$ up to the factor $34\Delta^2$.
\end{proof}

\begin{proof}[Proof of Corollary~\ref{cor-kernel}.]
Compute the partition $A, B, C, D$ of $V(G)$ as in Lemma~\ref{lemma-decomp}.
If $|C|+|D|\ge 34\Delta^2k$, then return the induced subgraph $G_0$ obtained in Lemma~\ref{lemma-boundsize}.
If $|C|+|D|<34\Delta^2k$, then $|B| \le 3\Delta(|C|+|D|)< 102\Delta^3k$ by the statement of Lemma~\ref{lemma-boundsize}, and $|B\cup C\cup D|<114\Delta^3 k$.
By Lemma~\ref{lemma-decomp}, the graph $G_0=G[B\cup C\cup D]$ satisfies $\alpha(G_0)-n_0/\Delta=\alpha(G)-n/\Delta$.
\end{proof}

\section*{Acknowledgements}

We would like to thank Andrea Munaro, who found an error in a previous version of this paper.

\bibliographystyle{plainurl}
\bibliography{deltaker}

\end{document}